\documentclass[11pt]{article}
\usepackage{amsmath,amsxtra,amssymb,latexsym, amscd,amsthm,pb-diagram,fancyhdr,euscript}
\usepackage{amsthm}
\usepackage{latexsym}
\usepackage{amssymb}
\usepackage{amsmath}
\usepackage{indentfirst}
\usepackage{hyperref}
\usepackage{mathrsfs}
\usepackage{array}
\usepackage{euscript}
\usepackage{slashed}
\usepackage[T1]{fontenc}
\usepackage{graphicx}
\usepackage[utf8]{inputenc}
\usepackage[french,english]{babel}
\usepackage{lipsum} 
\usepackage{multicol}
\hypersetup{
   colorlinks,
    citecolor=blue,
    filecolor=black,
    linkcolor=blue,
    urlcolor=magenta
}
\hypersetup{linktocpage}


\newcommand{\ba}{\mathbf{a}}
\newcommand{\bb}{\mathbf{b}}
\newcommand{\bc}{\mathbf{c}}
\newcommand{\bd}{\mathbf{d}}
\newcommand{\be}{\mathbf{e}}


\renewcommand{\d}{\mathrm{d}}

\newcommand{\hook}{\lrcorner}
\newcommand{\hnabla}{\hat{\nabla}}

\newcommand{\thorn}{\mbox{\th}}

\newcommand{\scri}{{\mathscr I}}

\newtheorem{defn}{Definition}[section]
\newtheorem{thm}{Theorem}[section]

\newtheorem{lem}{Lemma}[section]
\newtheorem{rem}{Remark}[section]
\topmargin0cm
\headheight1cm
\headsep1cm
\oddsidemargin0pt
\evensidemargin0pt
\textheight21cm
\textwidth16cm
\begin{document}\mbox{}

\vspace{0.25in}

\begin{center}
{\huge{\bf Peeling of Dirac fields on Kerr spacetimes}}

\vspace{0.25in}

\large{{\bf PHAM Truong Xuan\footnote{Email: xuanpt@tlu.edu.vn or phamtruongxuan.k5@gmail.com}}}

{\it Faculty of Information Technology, Department of Mathematics, Thuyloi university,\\  
175 Tay Son, Dong Da, Ha Noi, Viet Nam.}
\end{center}

\begin{abstract}
In a recent paper with J.-P. Nicolas [J.-P. Nicolas and P.T. Xuan, Annales Henri Poincar\'e 20(10):3419-3470, 2019. arXiv:1801.08996], we studied the peeling for scalar fields on Kerr metrics. The present work extends these results to Dirac fields on the same geometrical background. We follow the approach initiated by L.J. Mason and J.-P. Nicolas [L. Mason and J.-P. Nicolas, J Inst Math Jussieu 8(1):179–208, 2009. arXiv:gr-qc/0701049; L. Mason and J.-P. Nicolas, J Geom Phys 62(4):867–889, 2012. arXiv:1101.4333] on the Schwarzschild spacetime and extended to Kerr metrics for scalar fields. The method combines the Penrose conformal compactification and geometric energy estimates in order to work out a definition of the peeling at all orders in terms of Sobolev regularity near $\scri$, instead of ${\cal C}^k$ regularity at $\scri$, then provides the optimal spaces of initial data such that the associated solution satisfies the peeling at a given order. The results confirm that the analogous decay and regularity assumptions on initial data in Minkowski and in Kerr produce the same regularity across null infinity. Our results are local near spacelike infinity and are valid for all values of the angular momentum of the spacetime, including for fast Kerr metrics. 
\end{abstract}

{\bf Keywords.} Peeling, Dirac equation, Weyl equation, Kerr metric, null infinity, conformal compactification.

{\bf Mathematics subject classification.} 35L05, 35Q75, 83C57.

\tableofcontents

\section{Introduction}
The Peeling is a type of asymptotic behaviour of zero rest-mass fields initially discovered by R. Sachs  \cite{Sa61,Sa62}. Its initial formulation involved an expansion of the field in powers of $1/r$ along a null geodesic going out to infinity, and the alignment of a certain number of principal null directions of each term in the expansion along the null geodesic considered. R. Penrose introduced the conformal technique in the early 1960's \cite{Pe63,Pe64} and used it to establish that the peeling property is equivalent to the mere continuity of the rescaled field at null infinity \cite{Pe65}. He also suggested that the peeling property is generic on asymptotically flat backgrounds and provided the typical model of a spacetime on which the peeling should be valid. These are the so-called asymptotically simple spacetimes whose Weyl spinor satisfies the peeling property, which yields a smooth (or possibly ${\cal C}^k$) null infinity. The genericity of the peeling behaviour raised questions at the time and it was believed that the Schwarzschild metric, with its different asymptotic structure from Minkowski spacetime, would impose more stringent constraints on initial data for the corresponding solutions to peel. This question was resolved by L. Mason and J.-P. Nicolas in \cite{MaNi2009,MaNi2012} in which it is shown that the same regularity and decay assumptions on the initial data in Minkowski and Schwarzschild spacetimes yield the same regularity at null infinity. The peeling for linearized gravity and for full gravity has been studied intensively (see Friedrich \cite{HFri2004}, Christodoulou-Klainerman \cite{ChriKla}, Corvino \cite{Co2000}, Chrusciel and Delay \cite{ChruDe2002,ChruDe2003}, Corvino-Schoen \cite{CoScho2003} and Klainerman-Nicol{\`o} \cite{KlaNi,KlaNi2002,KlaNi2003}) and is now fairly well understood, at least in the flat case. However, it is not yet clear, given an asymptotically flat spacetime, which class of initial data yield solutions that admit the peeling property at a given order, and whether these classes are smaller than in Minkowski spacetime or not.

The works of L. Mason and J.-P. Nicolas in 2009 and 2012 \cite{MaNi2012,MaNi2009} were precisely aimed at answering this last question for the Schwarzschild metric, for scalar, Dirac and Maxwell fields. Their method combines the Penrose compactification of the spacetime and geometric energy estimates. By working in a neighbourhood of spacelike infinity on the compactified spacetime, one obtains energy estimates at all orders for the rescaled field, which control weighted Sobolev norms on $\scri$ in terms of similar norms on a Cauchy hypersurface and vice versa. The finiteness of the norms up to order $k$ at $\scri$ defines the peeling of order $k$. By completion of smooth compactly supported data on the Cauchy hypersurface in the corresponding norms, one obtains the optimal classes of data ensuring that the associated solution peels at order $k$. The result does not strictly refer to the regularity near spacelike infinity. Indeed, if the regularity is controlled in a neighbourhood of $i^0$ and on the full initial data hypersurface, it can be extended to the whole of $\scri$ by standard results for hyperbolic equations.

In this paper and the recent work \cite{NiXu2018}, we study the peeling on Kerr spacetime. The previous work \cite{NiXu2018} was an extension of the results of \cite{MaNi2009} to the Kerr metric and also treated the case of a semi-linear conformal wave equation. Here, we extend the results of \cite{MaNi2012} for Dirac fields to Kerr metrics. The method is adapted from \cite{MaNi2012,MaNi2009,NiXu2018}. On the compactified Kerr spacetime, we choose a neighbourhood of spacelike infinity that is globally hyperbolic in order to ensure that the massless Dirac equation (the Weyl equation) has a well-posed Cauchy problem. This neighbourhood is bounded in the past by the Cauchy hypersurface $\{ t=0 \}$ and in the future by future null infinity and by another null hypersurface spanned by outgoing simple null geodesics. We then obtain energy estimates both ways at all orders between the future and past boundaries of our domain, for smooth Weyl fields. An important difference with the case of the wave equation is that we have a conserved causal current, which yields a positive definite conserved quantity on spacelike hypersurfaces. This current comes from the symplectic structure associated with the equation and not from a stress-energy tensor. We therefore do not need to choose a timelike vector field with which to measure an energy. The basic energy estimate is directly an equality between $L^2$ norms on the future and past boundaries. Then, we obtain higher order estimates by commuting into the equation covariant derivatives along a family of five vector fields generating the tangent bundle. Unlike Schwarzschild, Kerr spacetime is neither spherically symmetric nor static; the consequence is that we need to control all directional derivatives at the same time, we cannot, as in the Schwarzschild case, isolate the directional derivative along outgoing principal null directions and control it on its own.

Our results are valid for all values of the mass and angular momentum, including for extreme and fast Kerr metrics. In the slow and extreme cases, we have a well-posed Cauchy problem in the black hole exterior and our results in the neighbourhood of spacelike infinity describe the asymptotic behaviour of solutions living in the whole exterior of the black hole. In the fast case, we have a naked singularity and a time machine and the Cauchy problem may not be well-posed, hence are results only describe the asymptotic behaviour of solutions in the whole spacetime if there are such solutions in the energy space considered.

The paper is organised as follows: in Section 2, we recall the definition of the Kerr metric, construct its conformal compactification, define the neighbourhood $\Omega_{^*t_0}^+$ of spacelike infinity on which we shall perform our estimates as well as its foliation by spacelike slices ${\cal H}_s$ (which are described in detail in $\cite{NiXu2018}$), present our choice of Newman-Penrose tetrad and calculate the associated spin coefficients in both the original spacetime and the rescaled one. Section 3 gives the detailed expression of the Weyl equation, some discussions on the Cauchy problem in the neighbourhood $\Omega_{^*t_0}^+$; we also introduce the conserved current, give the associated energy equality between the future and past boundaries of $\Omega_{^*t_0}^+$ and then we obtain simplified equivalent expressions of the energies on the slices of the foliation $\{ {\cal H}_s\}_s$. The peeling itself is addressed in Section 4: we obtain the approximate conservation laws for the successive directional covariant derivatives of the field and infer energy estimates both ways at all orders between the future and past boundaries of $\Omega_{^*t_0}^+$. The finiteness of the energy norm of order $k$ on $\scri^+$ defines the peeling of order $k$. Since all the energy estimates are lossless and go both ways, the optimal classes of initial data for which the corresponding solutions satisfy a peeling of order $k$ are given by the completion of smooth compactly supported initial data in the energy norm of order $k$ on the Cauchy hypersurface. Similarly to the Schwarzschild case, the estimates are uniform in the mass and angular momentum of the spacetime on any compact domain of the form $[0,M] \times [-a,a]$. This guarantees the uniformity of the optimal classes of data in such intervals and in particular entails that the optimal classes in Kerr and in Minkowski are the same, in the sense that they are characterised by the same local regularity and fall-off at spacelike infinity. The appendix contains the proof of a technical lemma appearing in Section 4.

\subsection*{Notations and conventions}
\begin{enumerate}
\item Throughout the paper, we use the formalisms of abstract indices, 2-component spinors, Newman-Penrose and Geroch-Held-Penrose.
\item We follows the convention by Penrose and Rindler \cite{PeRi84} about the Hodge dual of a 1-form $\alpha$ on a spacetime $({\cal M},g)$ (i.e. a $4-$dimensional Lorentzian manifold that is oriented and time-oriented)
\begin{equation}\label{Hodgedual}
(*\alpha)_{abcd} = e_{abcd}{\alpha}^d,
\end{equation}
where $e_{abcd}$ is the volume form on $({\cal M},g)$, denoted simply as $\mathrm{dVol}$. We shall use the following differential operator of the Hodge star
\begin{equation*}
\d *\alpha = -\frac{1}{4}(\nabla_a\alpha^a)\mathrm{dVol}.
\end{equation*}
If $\mathcal{S}$ is the boundary of a bounded open set $\Omega$ and has outgoing
orientation, using Stokes theorem, we have
\begin{equation}\label{Stokesformula}
-4\int_{\mathcal{S}}*\alpha = \int_{\Omega}(\nabla_a\alpha^a)\mathrm{dVol}.
\end{equation}
\item Let $f(x)$ and $g(x)$ be two real functions. We write $f \lesssim g$ if there exists a constant $C \in (0,+\infty)$ such that $f(x)\leq C g(x)$ for all $x$, and write $f\simeq g$ if both $f\lesssim g$ and $g\lesssim f$ are valid.

\end{enumerate}
{\bf Acknowledgements.} 
The author is grateful to Professor Jean-philippe Nicolas for suggesting this problem and for his
constant interest in his work.

\section{Geometrical and analytical setting} 
The geometry of Kerr spacetime is fully presented in the book of O'Neill \cite{ONei95}. The compactification of the exterior domain of Kerr black hole is constructed in the papers of Nicolas \cite{Ni02} and H\"afner \cite{Ha09}. In this section we just recall some basic formulas and properties which are required in the next ones. For more details on the geometric and analytic settings for the peeling on Kerr spacetime, we refer to \cite{NiXu2018}.
\subsection{The Kerr spacetime}
In Boyer-Lindquist coordinates, Kerr spacetime is a manifold $({\cal M}={\mathbb R}_t\times {\mathbb R}_r \times S_\omega^2, \, g)$ whose the metric $g$ takes the form
\begin{equation}\label{OriginalKerrMetric}
 g=\left(1-\frac{2Mr}{\rho^2}\right)\d t^2 +\frac{4aMr\sin^2\theta}{\rho^2}\d t\d \varphi -\frac{\rho^2}{\Delta}\d r^2-\rho^2\d \theta^2-\frac{\sigma^2}{\rho^2}\sin^2\theta\d \varphi^2, 
\end{equation}
$$ \rho^2=r^2+a^2\cos^2\theta,\Delta=r^2-2Mr+a^2, $$
$$ \sigma^2=(r^2+a^2)\rho^2+2Mra^2\sin^2\theta=(r^2+a^2)^2-\Delta a^2\sin^2\theta,$$
where $M>0$ is the mass of the black hole and $a\neq 0$ is its angular momentum per unit mass. Kerr spacetime is asymptotically plat and there are only two basic Killing vectors $\partial_t$ and $\partial_{\varphi}$. 

The Kerr metric given by \eqref{OriginalKerrMetric} has two types of singularities : the curvature singularity $\rho =0$ and the coordinate singularities corresponding to event horizons $\Delta =0$. There are three types of Kerr spacetimes depending the number of horizons i.e the solutions of the equation $\Delta = 0$. 
\begin{itemize}
\item[$\bullet$] Slow Kerr spacetime for $0<|a|<M$, $\Delta$ has two roots
$$r_{\pm} = M \pm \sqrt{M^2-a^2}$$
\item[$\bullet$] Extreme Kerr spacetime for $|a|=M$, $\Delta$ has an unique root $M$.
\item[$\bullet$] Fast Kerr spacetime for $|a|>M$, $\Delta$ has no root.
\end{itemize}
The black hole only exists in slow and extreme Kerr spacetimes. Slow Kerr spacetime has three blocks: $\mathcal{B}_I = \left\{ r>r_+\right\},$ the exterior of the black hole; $\mathcal{B}_{II} = \left\{r_-<r<r_+\right\},$ the dynamic region and $\mathcal{B}_{III}=\left\{r<r_-\right\},$ the region containing the singularity and a time machine. Extreme Kerr spacetime contains only two blocks $\mathcal{B}_I$ and $\mathcal{B}_{III}$. 

In Kerr spacetime, the Weyl spinor has two double principle null directions (see \cite{ONei95}):
\begin{equation}
V^{\pm}=\frac{r^2+a^2}{\Delta}\partial_t\pm\partial_{r} + \frac{a}{\Delta}\partial_{\varphi}.
\end{equation}
By the Goldberg-Sachs theorem (see \cite{GoSa62}) the integral curves of $V^+$ (resp. $V^-$) define geodesic shear-free congruences called the outgoing (resp. incoming) principal null geodesics. The principal null geodesics have  a non trivial twist, hence they are not surface forming. Another family of null geodesics of Kerr spacetime is the simple null geodesics (see \cite{FleLu03, Ha09}) that also form incoming and outgoing congruences, but non twisting i.e they are surface forming. 

For slow and extreme Kerr spacetime ($M\geq a$), block I is globally hyperbolic (see \cite{HaNi04}) and thus admits a spin structure. Fast Kerr spaccetime has no event horizon. The singularity is naked. Moreover, there is a time machine. Therefore, the spacetime is not globally hyperbolic. Hence, the Cauchy problem of the hyperbolic equations does not make the sense. However, for the purpose of this paper, we shall work on a neighbourhood of spacelike infinity which is globally hyperbolic and so admits a spin structure. Following \cite{HaNi04}, we denote by ${\mathbb S}^A$ the spin bundle over $\mathcal{B}_I$ (or only over the neighbourhood of spacelike infinity in the fast case) and by ${\mathbb S}^{A'}$ the same bundle for the complex structure replaced by its opposite. The dual bundles will be denoted by ${\mathbb S}_A$ and ${\mathbb S}_{A'}$ respectively. An abstract tensor index $a$ is a combination of an unprimed spinor index $A$ and a primed spinor index $A'$ i.e $a = AA'$. The spin bundle $\mathbb {S}^A$ (resp. ${\mathbb{S}}^{A'}$) admits a canonical sympletic form called the Levi-Civita symbol, and denoted by $\varepsilon_{AB}$ (we denote by $\varepsilon_{A'B'}$ the conjugate object on $\mathbb{S}^{A'}$, i.e.  
$\varepsilon_{A'B'}:=\bar{\varepsilon}_{A'B'}=\overline{\varepsilon_{AB}}$). These sympletic structures are compatible with the metric i.e
$$g_{ab} = \varepsilon_{AB}\varepsilon_{A'B'}.$$

The determinant of $g$ is given by $\det{g} = -\rho^4\sin^2\theta$, and we give here the form of the inverse Kerr metric $g^{-1}$, which will be useful to the next sections
\begin{equation}\label{inverse}
g^{-1} = \frac{1}{\rho^2} \left( \frac{\sigma^2}{\Delta}\partial_t^2 - \frac{2aMr}{\Delta}\partial_t\partial_{\varphi} - \Delta\partial_r^2 - \partial_{\theta}^2 - \frac{\rho^2 - 2Mr}{\Delta\sin^2\theta}\partial_{\varphi}  \right).
\end{equation}

\subsection{Rescaled Kerr spacetime}
In this paper, the Penrose compactification of Kerr spacetime is constructed using the principal null geodesics $V^{\pm}$. The star-Kerr coordinates $({^*t},r,\theta,{^*\varphi})$ are defined as
$${^*t} = t - r_*, \, {^*\varphi} = \varphi - \Lambda(r),$$
where the function $r_*$ is Regge-Wheeler-type variable and $\Lambda$ sastifies
$$\frac{\d r_*}{\d r} = \frac{r^2+a^2}{\Delta}, \,  \frac{\d \Lambda}{\d r} = \frac{a}{\Delta} \, .$$
In these coordinates, the outgoing principal null geodesics are the $r$-coordinate lines and the metric has the form
$$g = \left( 1 - \frac{2Mr}{\rho^2} \right)\d {^*t}^2 + \frac{4aMr\sin^2\theta}{\rho^2}\d{^*t}\d{^*\varphi} - \rho^2\d\theta^2 - \frac{\sigma^2}{\rho^2}\sin^2\theta\d{^*\varphi} + 2\d{^*t}\d r - 2a\sin^2\theta\d{^*\varphi}\d r.$$

We now consider the coordinates $({^*t}, \, R =1/r,\, \theta,\, {^*\varphi})$, and rescale the Kerr metric with the conformal factor $\Omega^2=R^2$, as follows
\begin{eqnarray}
\hat{g}: = R^2 g&=& R^2\left(1-\frac{2Mr}{\rho^2}\right)\d {^*t}^2+\frac{4MaR\sin^2\theta}{\rho^2}\d {^*t} \d {^*\varphi} \nonumber\\
&& -(1+a^2R^2\cos^2\theta)\d {\theta^2} -\left(1+a^2R^2+\frac{2Ma^2R\sin^2\theta}{\rho^2}\right)\sin^2\theta \d {^*\varphi^2}\nonumber\\
&&-2 \d {^*t} \d R+2a\sin^2\theta \d {^*\varphi} \d R. \label{RescMet}
\end{eqnarray}
The rescaled metric extends smoothly and non degenerately to the null hypersurface $\scri^+ = \mathbb{R}_{^*t}\times \left\{ R = 0 \right\} \times S_{\omega}^2$, which can be added as a smooth boundary to Kerr spacetime and will be called future null infinity. The Levi-Civita symbols must be rescaled as 
$$\hat{\varepsilon}_{AB}=R\varepsilon_{AB}.$$
The rescaled metric has the inverse form
\begin{eqnarray}\label{InvertResMetric}
{\hat g}^{-1}&=&-\frac{1}{\rho^2}\left(r^2a^2\sin^2\theta\partial^2_{^*t}+2(r^2+a^2)\partial_{^*t}\partial_R+2ar^2\partial_{^*t}\partial_{^*\varphi}+2a\partial_R\partial_{^*\varphi} \right) \nonumber\\
&&-\frac{1}{\rho^2} \left(R^2\Delta\partial^2_R+r^2\partial_{\theta}^2+\frac{r^2}{\sin^2\theta}\partial^2_{^*\varphi}\right). \label{InRescMet}
\end{eqnarray}

We will use the normalized Newman-Penrose tetrad given by H\"afner and Nicolas in \cite{HaNi04}. More precisely
\begin{eqnarray}
l^a\partial_a &=& \frac{1}{\sqrt{2\Delta\rho^2}} \left( (r^2+a^2)\partial_t + \Delta \partial_r + a \partial_\varphi \right), \label{New0}\\
n^a \partial_a &=& \frac{1}{\sqrt{2\Delta\rho^2}} \left( (r^2+a^2)\partial_t - \Delta \partial_r + a \partial_\varphi \right), \label{New1}\\
m^a\partial_a &=& \frac{1}{p\sqrt 2}\left(  ia\sin\theta \partial_t + \partial_\theta + \frac{i}{\sin\theta} \partial_\varphi  \right), \, \mbox{where} \,\,  p = r + ia \cos\theta.\label{New2}
\end{eqnarray}
The dual tetrad of a $1-$form is given as follows
\begin{eqnarray}
l_a \d x^a &=& \sqrt{\frac{\Delta}{2\rho^2}} \left( \d t - \frac{\rho^2}{\Delta} \d r - a\sin^2\theta \d \varphi \right), \label{DualNew0}\\
n_a \d x^a &=& \sqrt{\frac{\Delta}{2\rho^2}} \left( \d t + \frac{\rho^2}{\Delta} \d r - a\sin^2\theta \d \varphi \right), \label{DualNew1}\\
m_a \d x^a &=& \frac{1}{p\sqrt 2} \left( ia\sin\theta \d t - \rho^2 \d \theta - i(r^2+a^2)\sin\theta \d \varphi \right).\label{DualNew2}
\end{eqnarray}
We define the rescaled tetrad $\hat{l}^a, \, \hat{n}^a, \, \hat{m}^a,\, \bar{\hat{m}}^a,$ which is normalized with respect to $\hat{g}$, as follows:
$$ \hat l^a = r^2 l^a, \, \hat n^a = n^a, \, \hat m^a = rm^a $$
and we have
$$ \hat{l}_a = l_a, \, \hat{n}_a = R^2n_a, \, \hat{m}_a = Rm_a.$$
In the coordinates $({^*t}, \, R, \, \theta, \, {^*\varphi})$, the rescaled Newman Penrose tetrad becomes
\begin{eqnarray}
\hat{l}^a \partial_a &=& -\sqrt\frac{\Delta}{2\rho^2}\partial_R,\label{resNew0}\\
\hat{n}^a \partial_a &=& \sqrt\frac{2}{\Delta \rho^2} \left( (r^2+a^2)\partial_{^*t}+a\partial_{^*\varphi}+\frac{R^2\Delta}{2}\partial_R \right),\label{resNew1}\\
\hat{m}^a \partial_a &=& \frac{r}{p\sqrt{2}} \left(ia\sin\theta\partial_{^*t}+ \partial_\theta + \frac{i}{\sin \theta} \partial_{^*\varphi} \right) \label{resNew2}
\end{eqnarray} 
and
\begin{eqnarray}
\hat l_a \d x^a &=& \sqrt{\frac{\Delta}{2\rho^2}} \left( \d {^*t} - a\sin^2\theta \d {^*\varphi} \right),\label{dualNew1}\\
\hat n_a \d x^a &=& \sqrt{\frac{\Delta}{2\rho^2}} \left( R^2 \d {^*t} - \frac{2\rho^2}{\Delta}\d R - aR^2\sin^2\theta \d {^*\varphi} \right),\label{dualNew2}\\
\hat m_a \d x^a &=& \frac{1}{p\sqrt 2} \left( iaR\sin\theta \d {^*t} - R\rho^2 \d \theta - iR(r^2+a^2)\sin\theta \d {^*\varphi} \right).\label{dualNew3}
\end{eqnarray}
In terms of the associated spin-frame $\left\{ o^A,\iota^A \right\}$, the above relation is equivalent to the following rescaling
$$ \hat o^A = r o^A, \, \hat \iota^A = \iota^A, \, \hat{o}_A = o_A, \, \hat{\iota}_A = R\iota_A.$$
Denoting $\hat{l}_a\d x^a,\, \hat{n}_a \d x^a, \, \hat{m}_a\d x^a$ and $\bar{\hat m}_a \d x^a$ by $\hat{l},\, \hat{n}, \, \hat{m}$ and $\bar{\hat{m}}$ respectively, the $4$-volume measure associated with the metric $\hat{g}$ can be also calculated as
$$ \mathrm{dVol} = i \hat{l} \wedge \hat{n} \wedge \hat{m} \wedge \bar{\hat{m}} = R^2\rho^2 \d {^*t} \wedge \d R \wedge \d^2 \omega.$$ 

\subsection{Rescaling of spin coefficients}
Using the Newman-Penrose tetrad normalization  \eqref{New0}, \eqref{New1} and \eqref{New2}, the spin coefficients are calculated (see \cite{HaNi04}):
\begin{equation} \label{spin0}
\kappa= \tilde{\sigma} = \lambda = \nu = 0, 
\end{equation} 
\begin{equation} \label{spin1}
\tau = -\frac{ia\sin\theta}{\sqrt 2 \rho^2} , \, \pi = \frac{ia\sin\theta}{\sqrt 2 \bar p^2}, \, 
\tilde{\rho} = \mu = -\frac{1}{\bar p}\sqrt{\frac{\Delta}{2\rho^2}}, \, \varepsilon = \frac{Mr^2 - a^2(r\sin^2\theta + M\cos^2\theta)}{2\rho^2\sqrt{2\Delta\rho^2}},
\end{equation}
\begin{equation} \label{spin2}
\alpha = \frac{1}{\sqrt{2}\bar{p}}\left( \frac{ia\sin\theta}{\bar {p}}-\frac{\cot\theta}{2} + \frac{a^2\sin\theta\cos\theta}{2\rho^2} \right), \, \beta = \frac{1}{\sqrt{2}p} \left( \frac{\cot\theta}{2} +\frac{a^2\sin\theta\cos\theta}{2\rho^2} \right),
\end{equation} 
\begin{equation} \label{spin3}
\gamma=\frac{Mr^2 - a^2(r\sin^2\theta + M\cos^2\theta)}{2\rho^2\sqrt{2\Delta \rho^2}} - \frac{ia\cos\theta}{\rho^2}\sqrt{\frac{\Delta}{2\rho^2}}, 
\end{equation} 
here we use $\tilde{\sigma}$ and $\tilde{\rho}$ to avoid the confusion with the parameters $\sigma$ and $\rho$ in the Kerr metric.

We shall use the standard Newman-Penrose notations $D, \, D', \, \delta$ and $\delta'$ for the directional derivatives $l^a\nabla_a, \, n^a\nabla_a, \, m^a\nabla_a$ and $\bar{m}^a\nabla_a$ respectively. The relation between the rescaled spin coefficients and original ones are given in the following table (see Penrose and Rindler \cite{PeRi84}, Vol$1$, page 359)
\begin{table}[htb]
\[ \begin{array}{|c|c|c|} \hline {\hat{\kappa} = \Omega^{-3} \kappa} & {\hat{\varepsilon} = \Omega^{-2} \varepsilon} & {\hat{\pi} = \Omega^{-1} (\pi + \delta'\omega) } \\ \hline {\hat{\rho} = \Omega^{-2} ( \tilde{\rho} - D\omega ) } & {\hat{\alpha} = \Omega^{-1}( \alpha - \delta'\omega ) } & {\hat{\lambda} = \lambda} \\ \hline {\hat{\sigma} = \Omega^{-2} \tilde{\sigma}} & {\hat{\beta} = \Omega^{-1} \beta} & {\hat{\mu} = \mu + D'\omega } \\ \hline {\hat{\tau} = \Omega \tau} & {\hat{\gamma} = \gamma - D'\omega } & {\hat{\nu} = \Omega \nu } \\ \hline \end{array} \]
\caption{Behaviour of spin-coefficients under rescalling.}
\end{table}

Where the derivatives along the original frame vectors of $\omega = \log \Omega = - \log r$ are
\begin{gather}
D\omega = \frac{1}{\sqrt{2\Delta\rho^2}} \left( (r^2+a^2)\partial_t + \Delta \partial_r + a \partial_\varphi \right) (- \log r) = -\frac{1}{r} \sqrt{\frac{\Delta}{2\rho^2}}, \\
\delta' \omega = \frac{r}{\bar p\sqrt{2}} \left(-ia\sin\theta\partial_{^*t}+ \partial_\theta - \frac{i}{\sin \theta} \partial_{^*\varphi} \right) (- \log r)= 0,\\
\delta \omega = \frac{r}{p\sqrt{2}} \left(ia\sin\theta\partial_{^*t}+ \partial_\theta + \frac{i}{\sin \theta} \partial_{^*\varphi} \right) (- \log r) = 0, \\
D'\omega = \frac{1}{\sqrt{2\Delta\rho^2}} \left( (r^2+a^2)\partial_t - \Delta \partial_r + a \partial_\varphi \right) (- \log r) = \frac{1}{r} \sqrt{\frac{\Delta}{2\rho^2}}.
\end{gather}
Putting these together with the values of original spin coefficients, we obtain the rescaled spin coefficients 
\begin{equation} \label{ResSpin-coffi1}
 \hat\kappa=\hat\sigma=\hat\lambda=\hat\nu=0, 
\end{equation} 
\begin{equation} \label{ResSpin-coffi2}
\hat\tau=-\frac{ia\sin\theta r}{\sqrt 2 \rho^2} , \, \hat\pi=\frac{ia\sin\theta r}{\sqrt 2 \bar p^2}, \, 
\hat\rho = -\frac{iar\cos\theta}{\bar p}\sqrt{\frac{\Delta}{2\rho^2}}, \,  \hat\mu = \left(R-\frac{1}{\bar p}\right)\sqrt{\frac{\Delta}{2\rho^2}},
\end{equation}
\begin{equation} \label{ResSpin-coffi3}
\hat\varepsilon = \frac{Mr^4 - a^2r^2(r\sin^2\theta + M\cos^2\theta)}{2\rho^2\sqrt{2\Delta\rho^2}},\, 
\hat\alpha= \frac{r}{\sqrt 2\bar{p}}\left(\frac{ia\sin\theta}{\bar p}-\frac{\cot\theta}{2}+\frac{a^2\sin\theta\cos\theta}{2\rho^2}\right),
\end{equation} 
\begin{equation} \label{ResSpin-coffi4}
 \hat\beta=\frac {r}{\sqrt 2 p}\left(\frac{\cot\theta}{2}+\frac{a^2\sin\theta\cos\theta}{2\rho^2}\right),
\end{equation} 
\begin{equation} \label{ResSpin-coffi5}
 \hat\gamma=\frac{Mr^2 - a^2(r\sin^2\theta + M\cos^2\theta)}{2\rho^2\sqrt{2\Delta \rho^2}} - \left(\frac{ia\cos\theta}{\rho^2}+R\right)\sqrt{\frac{\Delta}{2\rho^2}}.
\end{equation} 
\begin{rem}
The orders in $R$ of $\hat{\tau}, \, \hat{\pi}, \, \hat{\mu}$ and $\hat{\gamma}$ are greater than or equal to one, whereas the ones of $\hat{\rho}, \, \hat{\varepsilon}, \, \hat{\alpha}$ and $\hat{\beta}$ are all zero.
\end{rem}

\subsection{Neighbourhood of spacelike infinity}
Following \cite{NiXu2018}, we work on a neighbourhood $\Omega_{^*t_0}^+ \, ({^*t_0}\ll -1)$ of spacelike infinity $i^0$ that is sufficiently far away from the black hole and singularities. It is bounded by a part of the Cauchy hupersurface $\Sigma_0 = \left\{ t = 0 \right\}$, a part of future null infinity $\scri^+$ and a null hypersurface ${\cal S}_{^*t_0}$. Before determining ${\cal S}_{^*t_0}$, we recall
the generalized Bondi-Sachs coordinate system $(\hat{t},r,\hat{\theta},\hat{\varphi})$ which is introduced by Flechter and Lund \cite{FleLu03}. The variable $\hat{t}$ is defined by
$$\hat{t} = t - \hat{r},$$
where
$$\hat{r}:= r_* + \int_0^{r_*}\left( \sqrt{1- \frac{a^2\Delta(s)}{((r(s))^2+a^2)^2}} - 1 \right)ds + a\sin\theta,$$
$r(r_*)$ being the reciprocal function of $r \mapsto r_*$ and
$$\Delta(r_*) = (r(r_*))^2 - 2Mr(r_*) + a^2.$$
The function $\hat{\varphi}$ is defined modulo a choice of constant of integration by
$$\hat{\varphi}:= \varphi - \int \frac{2aMr}{\Delta\sqrt{(r^2+a^2)^2-\Delta a^2}}dr.$$
As for the variable $\hat{\theta}\in [0,\pi]$, it is defined in an implicit manner by the following equation
$$\frac{1+\tanh\alpha\sin\hat{\theta}}{\tanh\theta+\sin\hat{\theta}} = \sin\theta,$$
where $\alpha$ is a primitive of $a/\sqrt{(r^2+a^2)^2-\Delta a^2}$.

The simple null geodesics of Kerr spacetime are defined as the $r$-coordinate lines in the generalized Bondi–Sachs coordinate system. Using these geodesics one can perform the star-Kerr and Kerr-star spacetimes. Then the Penrose compactification of Kerr spacetime can be obtained by inverting the variable $r$ in the generalized Bondi–Sachs coordinates (in details see \cite{Ha09}). The level hypersurfaces of $\hat{t}$ are null hypersurfaces with topology $\mathbb{R}\times S^2$ foliated by outgoing simple null geodesics; they intersect the $t = 0$
slice and $\scri^+$ at $2$-spheres. Now we determining the null hypersurface
${\cal S}_{^*t_0}$ as follows
$$ \mathcal{S}_{^*t_0} = \left\{ \hat{t} = {^*t_0}, \, t\geq 0 \right\} \text{ for } {^*t_0} \ll -1. $$
The neighbourhood $\Omega_{^*t_0}^+$ can be given precisely as 
$$ \Omega^+_{^*t_0} = {\cal I}^- ( \mathcal{S}_{^*t_0} ) \cap \{ t \geq 0\} $$
in the compactification domain. We foliate $\Omega_{^*t_0}^+$ by the spacelike hypersurfaces
$${\cal H}_s \, = \, \left\{^*t = -sr_*; \, \hat{t} \leq {^*t_0} \right\}, \,\,\,\, 0 \leq s \leq 1, \text{ for a given } {^*t}_0\ll -1.$$
The hypersurfaces ${\cal H}_0$ and ${\cal H}_1$ are the parts of $\scri^+$ and $\Sigma_0$ inside $\Omega_{^*t_0}^+$ respectively, we also denote ${\cal H}_0$ by $\scri^+_{^*t_0}$. Given $0 \leq s_2<s_2 \leq 1$, we will denote by $\mathcal{S}_{^*t_0}^{s_1,s_2}$ the part of $\mathcal{S}_{^*t_0}$ between $\mathcal{H}_{s_1}$ and $\mathcal{H}_{s_2}$.

We need the following lemma (its proof is evident) to establish simpler equivalent expressions in the next sections
\begin{lem}\label{epsilonestimates}
Let $\varepsilon >0$, then for ${^*t_0}\ll -1$, $|^*t_0|$ large enough, in $\Omega^+_{^*t_0}$, we have
\begin{equation*} 
1 < \frac{r_*}{r} < 1+\varepsilon, \,  0 < -^*t R \leq \frac{r_*}{r} < 1+\varepsilon \,\,\,  \mbox{and} \,\,\, 1-\varepsilon< \frac{\Delta}{r^2+a^2}<1.
\end{equation*}
\end{lem}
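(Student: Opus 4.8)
The plan is to reduce all three estimates to the single geometric fact that $r$ is uniformly large throughout $\Omega^+_{^*t_0}$ once $|{^*t_0}|$ is large, after which each inequality follows from the behaviour of the quantity involved as $r\to+\infty$. Accordingly I would first establish the three inequalities under the assumption that $r$ is large enough, and only afterwards show that the defining constraints of $\Omega^+_{^*t_0}$ force a lower bound on $r$ that diverges with $|{^*t_0}|$.

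For the last inequality one writes
\begin{equation*}
\frac{\Delta}{r^2+a^2}=\frac{r^2-2Mr+a^2}{r^2+a^2}=1-\frac{2Mr}{r^2+a^2}\,,
\end{equation*}
and since $2Mr/(r^2+a^2)$ is strictly positive for $r>0$ and behaves like $2M/r$ as $r\to+\infty$, one gets $\Delta/(r^2+a^2)<1$ everywhere and $\Delta/(r^2+a^2)>1-\varepsilon$ as soon as $r$ is large enough. For the first inequality I would use
\begin{equation*}
\frac{\d r_*}{\d r}=\frac{r^2+a^2}{\Delta}=1+\frac{2Mr}{\Delta}>1
\end{equation*}
in the region where $\Delta>0$; integrating yields $r_*=r+2M\log r+O(1)$ as $r\to+\infty$, so that $r_*-r>0$ for $r$ large (giving $r_*/r>1$) while $r_*-r=o(r)$ (giving $r_*/r<1+\varepsilon$). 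Note that the normalisation constant in $r_*$ is irrelevant here, since the logarithmic term eventually dominates it.

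The second inequality is then a direct consequence of the foliation $\{\mathcal{H}_s\}_s$. Every point of $\Omega^+_{^*t_0}$ lies on some slice $\mathcal{H}_s$, hence satisfies ${^*t}=-s\,r_*$ with $0\leq s\leq 1$; multiplying by $R=1/r$ gives $-{^*t}R=s\,r_*/r$, so that $0\leq -{^*t}R\leq r_*/r$, and the upper bound $r_*/r<1+\varepsilon$ is exactly the first estimate. The extreme values are instructive: $s=1$ corresponds to $\Sigma_0=\mathcal{H}_1$, where $-{^*t}R=r_*/r$, while $s=0$ corresponds to $\scri^+=\mathcal{H}_0$, where $R=0$ and the product vanishes.

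The only substantial step, which I expect to be the main obstacle, is the uniform lower bound on $r$. Here I would invoke the description of the neighbourhood in \cite{NiXu2018}: every point of $\Omega^+_{^*t_0}$ satisfies $\hat{t}\leq {^*t_0}$ and $t\geq 0$, so from $\hat{t}=t-\hat{r}$ one gets $\hat{r}=t-\hat{t}\geq -{^*t_0}=|{^*t_0}|$. The corrective integral in the definition of $\hat{r}$ is sublinear, because its integrand $\sqrt{1-a^2\Delta/(r^2+a^2)^2}-1$ tends to $0$ as $r\to+\infty$ (using again $\Delta/(r^2+a^2)\to 1$), so that $\hat{r}\simeq r_*\simeq r$ and the map $r\mapsto\hat{r}$ is increasing with $\hat{r}\sim r$. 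Consequently $\hat{r}\geq|{^*t_0}|$ forces $r\gtrsim|{^*t_0}|$ uniformly in $\theta$ on $\Omega^+_{^*t_0}$, whence $r\to+\infty$ as $|{^*t_0}|\to\infty$. It then suffices to take $|{^*t_0}|$ large enough that the induced lower bound on $r$ exceeds the thresholds produced in the previous steps, which yields the three estimates simultaneously.
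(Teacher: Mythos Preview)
Your argument is correct. The paper does not actually prove this lemma, declaring that ``its proof is evident''; you have simply supplied the missing elementary details, and your reduction to a uniform lower bound on $r$ via $\hat r = t-\hat t \geq |{^*t_0}|$ together with $\hat r\simeq r_*\simeq r$ is exactly the kind of reasoning the authors are taking for granted.
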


With the foliation $\left\{ {\cal H}_s \right\}_{0\leq s\leq 1}$, we choose an identifying vector field $\nu$ that satisfies $\nu(s)=1$ as follows
$$ \nu = r_*^2R^2 \frac{\Delta}{r^2+a^2}\vert ^*t \vert^{-1}\partial_R. $$
The $4-$volume measure $\mathrm{dVol}$ can be decomposed  into the product of $\d s$ along the integral lines of $\nu^a$ and the $3$-volume measure
$$ \nu \hook \mathrm{dVol} |_{{\cal H}_s}= -r_*^2 R^4\rho^2 \frac{\Delta}{r^2+a^2} \frac{1}{|^*t|}\d {^*t} \d^2\omega |_{{\cal H}_s} \simeq -\frac{1}{|{^*t}|} \d {^*t} \d^2\omega |_{{\cal H}_s} $$
on each slice ${\cal H}_s$.

\section{The Dirac fields}
A Dirac field is a solution of the Dirac equation and is the direct sum of a neutrino part $\chi^{A'} \in \mathbb{S}^{A'}$ and an anti-neutrino part $\psi_A \in \mathbb{S}_A$. In the massless case, the Dirac equation reduces to the Weyl anti-neutrino equation or simply called Weyl equation
\begin{equation}\label{OriginalWeylEquation}
\nabla^{AA'}\psi_{A} = 0.
\end{equation}

We denote the components of $\psi_A$ in the spin-frame $\left\{ o^A,\iota^A \right\}$ by $\psi_0$ and $\psi_1$:
$$\psi_0 = \psi_A o^A, \, \psi_1 = \psi_A\iota^A.$$
Using the Newman-Penrose formalism, the Weyl equation can be expressed under the form (see \cite{Cha}):
\begin{align}
\begin{cases}
n^{\bf a}\partial_{\bf a}\psi_0 - m^{\bf a}\partial_{\bf a}\psi_1 + (\mu - \gamma)\psi_0 + (\tau - \beta)\psi_1 = 0,\\
l^{\bf a}\partial_{\bf a}\psi_1 - \bar{m}^{\bf a}\partial_{\bf a}\psi_0 + (\alpha - \pi)\psi_0 + (\varepsilon - \tilde{\rho})\psi_1 = 0.\label{WeylEquationNewmanPenrose}
\end{cases}
\end{align}

\subsection{The rescaled equations}
The Weyl equation \eqref{OriginalWeylEquation} is conformally invariant (see Penrose and Rindler \cite{PeRi84}, Vol1, page 366) in the sense that the spinor field $\psi_A$ is the solution of \eqref{OriginalWeylEquation} if and only if the rescaled spinor field $\hat{\psi}_A=\Omega^{-1}\psi_A = r\psi_A$ is the solution of the rescaled equation
\begin{equation}\label{ResWeylEq}
\hnabla^{AA'}\hat{\psi}_A = 0. 
\end{equation} 

Decomposing by $\psi_A$ onto the spin-frame $\left\{ o^A,\iota^A \right\}$ and $\hat{\psi}_A$ onto the rescaled spin-frame $\left\{ \hat{o}^A,\hat{\iota}^A \right\}$, we have 
$$\psi_A = \psi_1o_A - \psi_0\iota_A , \, $$ 
$$\hat{\psi}_A = r\psi_A = r\psi_1 o_A - r\psi_0\iota_A = \hat{\psi}_1o_A - \hat{\psi}_0R\iota_A.$$ 
Therefore, the new components are related to the old components as follows 
$$\hat{\psi}_0 = r^2\psi_0, \, \hat{\psi}_1 = r\psi_1.$$ 
 
The equation \eqref{WeylEquationNewmanPenrose} provides the expression of the rescaled Weyl equation in the rescaled Newman-Penrose tetrad
\begin{align}\label{ResWeylEquationNewmanPenrose}
\begin{cases}
\hat{D}'\hat{\psi}_0 - \hat{\delta}\hat{\psi}_1 + (\hat{\mu} - \hat{\gamma})\hat{\psi}_0 + (\hat{\tau} - \hat{\beta})\hat{\psi}_1 = 0,\\
\hat{D}\hat{\psi}_1 - \hat{\delta}'\hat{\psi}_0 + (\hat{\alpha} - \hat{\pi})\hat{\psi}_0 + (\hat{\varepsilon} - \hat{\rho})\hat{\psi}_1 = 0, 
\end{cases}
\end{align} 
the link being 
\begin{eqnarray}
0 = \hnabla^{AA'}\hat{\psi}_A = &&\left( \hat{D}'\hat{\psi}_0 - \hat{\delta}\hat{\psi}_1 + (\hat{\mu} - \hat{\gamma})\hat{\psi}_0 + (\hat{\tau} - \hat{\beta})\hat{\psi}_1 \right) \bar{\hat{o}}^{A'} \nonumber\\
&&+ \left( \hat{D}\hat{\psi}_1 - \hat{\delta}'\hat{\psi}_0 + (\hat{\alpha} - \hat{\pi})\hat{\psi}_0 + (\hat{\varepsilon} - \hat{\rho})\hat{\psi}_1 \right)\bar{\hat{\iota}}^{A'}.\label{SpinorResWeyl}
\end{eqnarray} 
Replacing the rescaled Newman-Penrose tetrad \eqref{resNew0}, \eqref{resNew1}, \eqref{resNew2} and the rescaled spin coefficients \eqref{ResSpin-coffi1}-\eqref{ResSpin-coffi5} into \eqref{ResWeylEquationNewmanPenrose}, the detailed expression of the rescaled Weyl equation is as follows
\begin{gather}
\sqrt{\frac{2}{\Delta\rho^2}}\left( (r^2+a^2)\partial_{^*t} + a\partial_{^*\varphi} + \frac{R^2\Delta}{2}\partial_R \right)\hat{\psi}_0 - \frac{r}{\sqrt{2}p}\left( ia\sin\theta\partial_{^*t}+\partial_{\theta}+\frac{i}{\sin\theta}\partial_{^*\varphi} \right)\hat{\psi}_1 \nonumber\\
+\left( \left( 2R - \frac{r}{\rho^2} \right)\sqrt{\frac{\Delta}{2\rho^2}} - \frac{Mr^2-a^2(r\sin^2\theta+M\cos^2\theta)}{2\rho^2\sqrt{2\Delta\rho^2}} \right)\hat{\psi}_0 \nonumber\\
-\frac{r}{\sqrt{2}p}\left( \frac{ia\sin\theta}{\bar{p}} + \frac{\cot\theta}{2} + \frac{a^2\sin\theta\cos\theta}{2\rho^2}\right)\hat{\psi}_1 = 0, \label{ResWeyl1}\\
-\sqrt{\frac{\Delta}{2\rho^2}}\partial_R\hat{\psi}_1 + \frac{r}{\sqrt{2}\bar{p}}\left( ia\sin\theta\partial_{^*t} - \partial_{\theta} + \frac{i}{\sin\theta}\partial_{^*\varphi}  \right)\hat{\psi}_0 + \frac{r}{\sqrt{2}\bar{p}} \left( -\frac{\cot\theta}{2} + \frac{a^2\sin\theta\cos\theta}{2\rho^2}\right)\hat{\psi}_0 \nonumber\\
+ \left( \frac{Mr^4 - a^2r^2(r\sin^2\theta+M\cos^2\theta)}{2\rho^2\sqrt{2\Delta\rho^2}} + \frac{iar\cos\theta}{\bar{p}}\sqrt{\frac{\Delta}{2\rho^2}} \right)\hat{\psi}_1 = 0.\label{ResWeyl2}
\end{gather}
The Cauchy problem for the rescaled Weyl equation in $\Omega_{^*t_0}^+$ can be solved by the same method as for Dirac and Maxwell equations in \cite{MaNi2004} or the wave equation in \cite{Ni02}. The existence and uniqueness of the rescaled solution in ${\cal C}^{\infty}(\Omega^+_{^*t_0};\mathbb{S}_A)$ of the Cauchy problem  
\begin{align*}\label{CauchyResWeyl}
\begin{cases}
{\hnabla}^{AA'} \hat{\psi}_{A} = 0,\\
\hat{\psi}_{A}|_{{\cal{H}}_1} \in \mathcal{C}_0^\infty({\cal{H}}_1; \mathbb{S}_A),
\end{cases}
\end{align*}
allows us to extend the rescaled solution $\hat{\psi}_A$ to the boundary $\scri_{^*t_0}^+$. 
\begin{rem} 
The rescaled Weyl equation can be simplified by using the weighted differential operators of the GHP formalism (see Geroch-Held-Penrose \cite{GHP} and Penrose-Rindler \cite{PeRi84})
\begin{equation} \label{WeylEqGHP}
\left\{ \begin{array}{l}
{ \hat\thorn' \hat{\psi}_0 - \hat{\eth}  \hat{\psi}_1 = -\hat\mu{\hat\psi}_0-\hat\tau{\hat\psi}_1, } 
\\ \\
{ \hat\thorn \hat{\psi}_1 - \hat{\eth}' \hat{\psi}_0 = \hat\rho{\hat\psi}_1+\hat\pi{\hat\psi}_0,}  \end{array} \right.
\end{equation}
where  $\hat{\eth}'\hat{\psi}_0$ has weight $\left\{ 0;1 \right\}$ and $\hat{\eth}\hat{\psi}_1$ has weight $\left\{ 0;-1\right\}:$
$$\hat\thorn' \hat\psi_0 = (\hat D' - \hat\gamma)\hat\psi_0 \, , \, \hat\eth \hat\psi_1 = (\hat\delta + \hat\beta)\hat\psi_1,$$
$$\hat\thorn \hat\psi_1 = (\hat D + \hat\varepsilon)\hat\psi_1 \, , \, \hat\eth' \hat\psi_0 = (\hat\delta' - \hat\alpha)\hat\psi_0,$$
with $\hat{D}, \, \hat{D}',\, \hat{\delta}$ and $\hat{\delta}'$ are the derivatives along the rescaled Newman-Penrose formalism $\hat{l}^a\hnabla_a, \, \hat{n}^a\hnabla_a,$ $\hat{m}^a\hnabla_a$ and $\bar{\hat{m}}^a\hnabla_a$ respectively.
\end{rem} 

\subsection{Energy fluxes}
The conserved current for the rescaled Weyl equation is
\begin{align*}
\hat{J}^a = \hat{\psi}^A\bar{\hat\psi}^{A'} &= \hat{\varepsilon}^{AB}\hat{\psi}_B\hat{\varepsilon}^{A'B'}\bar{\hat{\psi}}_{B'} =\hat{\varepsilon}^{AB}\hat{\varepsilon}^{A'B'}(\hat{\psi}_1\hat{o}_B - \hat{\psi}_0\hat{\iota}_B)(\bar{\hat{\psi}}_1\bar{\hat{o}}_{B'} - \bar{\hat{\psi}}_0\bar{\hat{\iota}}_{B'})\\
&=\hat{\varepsilon}^{AB}\hat{\varepsilon}^{A'B'} (\vert\hat{\psi}_1\vert^2 \hat{o}_B\bar{\hat{o}}_{B'} - \hat{\psi}_1\bar{\hat{\psi}}_0\hat{o}_B\bar{\hat{\iota}}_{B'} - \hat{\psi}_0\bar{\hat{\psi}}_1\hat{\iota}_B\bar{\hat{o}}_{B'} + \vert\hat{\psi}_0\vert^2\hat{\iota}_B\bar{\hat{\iota}}_{B'} ) \\
&= \vert\hat{\psi}_1\vert^2\hat{l}^a\partial_a + \vert\hat{\psi}_0\vert^2\hat{n}^a\partial_a - \hat{\psi}_1\bar{\hat\psi}_0\hat{m}^a\partial_a - \hat{\psi}_0\bar{\hat{\psi}}_1\bar{\hat{m}}^a\partial_a.
\end{align*}
Its Hodge dual is given as (see \eqref{Hodgedual}):
\begin{align}\label{CurrentConserved}
\omega: &= *\hat{J}_a\d x^a = \hat{J}^a \partial_a \hook \mathrm{dVol}\nonumber\\
&= \left( \vert\hat{\psi}_1\vert^2\hat{l}^a\partial_a + \vert\hat{\psi}_0\vert^2\hat{n}^a\partial_a - \hat{\psi}_1\bar{\hat\psi}_0\hat{m}^a\partial_a - \bar{\hat{\psi}}_1\hat{\psi}_0\bar{\hat{m}}^a\partial_a  \right)\hook \mathrm{dVol}\nonumber\\
&= \left( \vert\hat{\psi}_1\vert^2\hat{l}^a\partial_a + \vert\hat{\psi}_0\vert^2\hat{n}^a\partial_a - \hat{\psi}_1\bar{\hat\psi}_0\hat{m}^a\partial_a - \bar{\hat{\psi}}_1\hat{\psi}_0\bar{\hat{m}}^a\partial_a  \right)\hook i\hat{l}\wedge\hat{n}\wedge\hat{m}\wedge\bar{\hat{m}} \nonumber\\
&= -i\hat{l}\wedge\hat{m}\wedge\bar{\hat{m}}\vert\hat{\psi}_1\vert^2 + i\hat{n}\wedge\hat{m}\wedge\bar{\hat{m}}\vert\hat{\psi}_0\vert^2 + i\hat{l}\wedge\hat{n}\wedge\hat{m}\hat{\psi}_1\bar{\hat{\psi}}_0 - i\hat{l}\wedge\hat{n}\wedge\bar{\hat{m}}\bar{\hat\psi}_1\hat{\psi}_0.
\end{align}
The energy flux on a oriented hypersurface $\cal{S}$ is defined as
$${\cal E}_{\cal S}(\hat\psi_A) = -4\int_{\cal S}\omega.$$

The equivalent form of the energy of the Dirac field on an oriented hypersurface ${\cal H}_s$ is given in the following lemma
\begin{lem}\label{EquivalentSimplerEnergy}
For $\vert{^*t_0}\vert$ large enough, the energy fluxes of $\hat{\psi}_A$ across the hypersurfaces ${\cal H}_{s}, \,\, (0\leq s \leq 1),$ have the following simpler equivalent expressions
$${\cal E}_{{\cal H}_s}(\hat{\psi}_A) \simeq \int_{{\cal H}_s} \left( \frac{R}{\vert{^*t}\vert}\vert\hat{\psi}_0\vert^2 + \vert\hat{\psi}_1\vert^2 \right)\d{^*t}\d^2\omega.$$
Moreover the energy flux across ${\cal S}_{^*t_0}$ is non negative.
\end{lem}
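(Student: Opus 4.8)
The plan is to turn the flux of the conserved current into a flux of the single scalar $\hat{J}(s):=\hat{J}^a\partial_a s$ against the $3$-volume $\nu\hook\dvol$, and then to read off the orders in $R$ and $|{^*t}|$ of the resulting coefficients from Lemma \ref{epsilonestimates}.

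I would first record the structural fact that produces non-negativity for free: $\hat{J}^a=\hat{\psi}^A\bar{\hat{\psi}}^{A'}$ is the flagpole of the spinor $\hat{\psi}^A$, hence a future-directed null vector field wherever $\hat{\psi}_A\neq0$ (and zero otherwise). Its flux through any achronal hypersurface carrying the future orientation is therefore non-negative. As $\mathcal{S}_{^*t_0}$ is a future-oriented null piece of the boundary of $\Omega^+_{^*t_0}$, this already gives $\mathcal{E}_{\mathcal{S}_{^*t_0}}(\hat{\psi}_A)\geq0$, which is the last assertion of the lemma; the same remark shows $\mathcal{E}_{\mathcal{H}_s}(\hat{\psi}_A)\geq0$ for every $s$.

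For the equivalence on $\mathcal{H}_s$ I use the identifying field $\nu$ with $\nu(s)=1$ and the splitting $\dvol=\d s\wedge(\nu\hook\dvol)$. Since $\omega=\hat{J}^a\partial_a\hook\dvol$, contracting and pulling back to a level set of $s$ annihilates the $\d s$ factor and leaves $\omega|_{\mathcal{H}_s}=\hat{J}(s)\,(\nu\hook\dvol)|_{\mathcal{H}_s}$, so that $\mathcal{E}_{\mathcal{H}_s}(\hat{\psi}_A)=-4\int_{\mathcal{H}_s}\hat{J}(s)\,(\nu\hook\dvol)$. Writing $s=-{^*t}/r_*$ gives $\partial_{^*t}s=-1/r_*$, $\partial_\theta s=\partial_{^*\varphi}s=0$ and $\partial_R s=-{^*t}\,r^2(r^2+a^2)/(r_*^2\Delta)$. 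Substituting these together with the tetrad \eqref{resNew0}, \eqref{resNew1}, \eqref{resNew2} into
$$\hat{J}(s)=|\hat{\psi}_1|^2\,\hat{l}(s)+|\hat{\psi}_0|^2\,\hat{n}(s)-\hat{\psi}_1\bar{\hat{\psi}}_0\,\hat{m}(s)-\hat{\psi}_0\bar{\hat{\psi}}_1\,\bar{\hat{m}}(s),$$
where $\hat{l}(s):=\hat{l}^a\partial_a s$ and so on, and simplifying $\hat{n}(s)$ by means of ${^*t}=-sr_*$, I obtain three coefficients whose orders are read off from Lemma \ref{epsilonestimates}:
$$|\hat{l}(s)|\simeq|{^*t}|,\qquad |\hat{n}(s)|\simeq R,\qquad |\hat{m}(s)|\lesssim R.$$
With $(\nu\hook\dvol)|_{\mathcal{H}_s}\simeq-\tfrac{1}{|{^*t}|}\d{^*t}\,\d^2\omega$ this makes the diagonal part of $\mathcal{E}_{\mathcal{H}_s}$ equal to $\simeq\int_{\mathcal{H}_s}\big(|\hat{\psi}_1|^2+\tfrac{R}{|{^*t}|}|\hat{\psi}_0|^2\big)\d{^*t}\,\d^2\omega$, with the expected non-negative sign by the causality of $\hat{J}^a$; this is exactly the claimed density.

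The one genuine point is the off-diagonal term. Non-negativity of $\hat{J}(s)$ already forces $|\hat{m}(s)|^2\leq|\hat{l}(s)|\,|\hat{n}(s)|$, but upgrading this to a two-sided equivalence with the diagonal requires quantitative room, and this is where Lemma \ref{epsilonestimates} is decisive: since $|\hat{m}(s)|^2\lesssim R^2$ while $|\hat{l}(s)|\,|\hat{n}(s)|\simeq|{^*t}|R$, the ratio $|\hat{m}(s)|^2/(|\hat{l}(s)|\,|\hat{n}(s)|)\lesssim R/|{^*t}|$ tends to $0$ uniformly on $\Omega^+_{^*t_0}$ as $|{^*t_0}|\to\infty$ (indeed $0<-{^*t}R<1+\varepsilon$ forces $R/|{^*t}|\lesssim|{^*t}|^{-2}$). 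Hence for $|{^*t_0}|$ large enough the cross contribution, bounded by $2|\hat{m}(s)|\,|\hat{\psi}_0|\,|\hat{\psi}_1|$, is absorbed by Young's inequality into half of $|\hat{l}(s)|\,|\hat{\psi}_1|^2+|\hat{n}(s)|\,|\hat{\psi}_0|^2$, which yields $\hat{J}(s)\simeq|\hat{l}(s)|\,|\hat{\psi}_1|^2+|\hat{n}(s)|\,|\hat{\psi}_0|^2$ and, after integration, the stated equivalence. I expect this absorption — rather than the tetrad algebra, which is routine — to be the main obstacle; I note that the same degeneration $|\hat{n}(s)|,|\hat{m}(s)|\to0$ as $R\to0$ keeps the estimate consistent at $s=0$, where only $|\hat{\psi}_1|^2$ survives on $\scri^+$.
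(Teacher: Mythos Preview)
Your proof is correct and reaches the same conclusion by the same essential mechanism as the paper: extract the quadratic form in $(\hat\psi_0,\hat\psi_1)$ carried by the flux, read off the orders of its coefficients from Lemma~\ref{epsilonestimates}, and absorb the cross term by Young/Cauchy--Schwarz.

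The organisation differs in one respect worth noting. The paper computes the pullback $\omega|_{\mathcal H_s}$ by restricting the dual tetrad $1$-forms $\hat l,\hat n,\hat m$ to $\mathcal H_s$ via $\d{^*t}=\frac{s(r^2+a^2)}{\Delta R^2}\d R$ and then evaluating the four wedge products $\hat l\wedge\hat m\wedge\bar{\hat m}$, $\hat n\wedge\hat m\wedge\bar{\hat m}$, $\hat l\wedge\hat n\wedge\hat m$, $\hat l\wedge\hat n\wedge\bar{\hat m}$ explicitly; this gives exact coefficients (with the $(2/s-1)$ factor, etc.) before the absorption. You instead exploit the identity $\dvol=\d s\wedge(\nu\hook\dvol)$ to reduce the flux to the single scalar $\hat J(s)=\hat J^a\partial_a s$ against the already-simplified $3$-volume $\nu\hook\dvol|_{\mathcal H_s}\simeq -|{^*t}|^{-1}\d{^*t}\,\d^2\omega$, and then only need the orders of $\hat l(s),\hat n(s),\hat m(s)$. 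Your route is a bit more structural and avoids the wedge-product algebra; the paper's route is more explicit and makes the constants visible. Either way the crux is the same quantitative gap $|\hat m(s)|^2/(|\hat l(s)|\,|\hat n(s)|)\lesssim R/|{^*t}|\to 0$, which both proofs use to absorb the off-diagonal contribution.
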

\begin{proof}
The energy flux across ${\cal S}_{^*t_0}$ is non negative since ${\cal S}_{^*t_0}$ is a null hypersurface, oriented by its future-pointing null normal vector field and $\hat{J}^a$ is a causal and future-pointing vector field.

Now on the hypersurfaces ${\cal H}_s, \,\, (0<s\leq 1),$ we have 
$$\d {^*t} = \frac{s(r^2+a^2)}{\Delta R^2}\d R.$$
Therefore, the rescaled dual Newman-Penrose tetrad \eqref{dualNew1}, \eqref{dualNew2} and \eqref{dualNew3} restricted on ${\cal H}_s$ is
\begin{eqnarray*}
\hat l_a \d x^a &=& \sqrt{\frac{\Delta}{2\rho^2}} \left( \d {^*t} - a\sin^2\theta \d {^*\varphi} \right),\\
\hat n_a|_{{\cal H}_s} \d x^a &=& \sqrt{\frac{\Delta}{2\rho^2}} \left( \left( R^2 - \frac{2\rho^2R^2}{s(r^2+a^2)} \right) \d {^*t} - aR^2\sin^2\theta \d {^*\varphi} \right),\\
\hat m_a \d x^a &=& \frac{1}{p\sqrt 2} \left( iaR\sin\theta \d {^*t} - R\rho^2 \d \theta - iR(r^2+a^2)\sin\theta \d {^*\varphi} \right).
\end{eqnarray*}
These equalities yield that
\begin{eqnarray*}
\hat{l}\wedge\hat{m}\wedge\bar{\hat{m}} &=& \sqrt{\frac{\Delta}{2\rho^2}}\frac{1}{2\rho^2} (-2iR^2\rho^2(r^2+a^2)\sin\theta \d{^*t}\wedge \d\theta \wedge \d{^*\varphi} \\
&& \hspace{3cm}+ 2ia^2R^2\rho^2\sin^3\theta \d{^*t}\wedge \d\theta \wedge \d{^*\varphi} )\\
&=&\sqrt{\frac{\Delta\rho^2}{2}}iR^2\d{^*t}\d^2\omega,
\end{eqnarray*}
\begin{eqnarray*}
\hat{n}|_{{\cal H}_s}\wedge\hat{m}\wedge\bar{\hat{m}} &=& \sqrt{\frac{\Delta}{2\rho^2}}\frac{1}{2\rho^2} \left( -2iR^4\rho^2\left( 1 - \frac{2\rho^2}{s(r^2+a^2)}\right)(r^2+a^2)\sin\theta\d{^*t}\wedge \d\theta \wedge \d{^*\varphi} \right.\\
&&\hspace{3cm} \left. +2ia^2R^4\rho^2\sin^3\theta \d{^*t}\wedge\d\theta\wedge\d{^*\varphi} \right) \\
&=& -\sqrt{\frac{\Delta\rho^2}{2}}i\left( 1 - \frac{2}{s} \right)R^4\d{^*t}\d^2\omega,
\end{eqnarray*}
and
\begin{eqnarray*}
\hat{l}\wedge\hat{n}|_{{\cal H}_s}\wedge \hat{m}&=& \frac{\Delta}{2\rho^2}\frac{1}{p\sqrt{2}}\left(-aR^3\rho^2\sin^2\theta \d{^*t}\wedge\d\theta\wedge\d{^*\varphi}\right.\\
&&\hspace{3cm} \left. +aR\rho^2\sin^2\theta\left( R^2 - \frac{2R^2\rho^2}{s(r^2+a^2)} \right) \d{^*t}\wedge\d\theta\wedge\d{^*\varphi}\right)\\
&=&-\frac{a\Delta R^3\bar{p}\sin\theta}{\sqrt{2}s(r^2+a^2)}\d{^*t}\d^2\omega,
\end{eqnarray*}
similarly
\begin{eqnarray*}
\hat{l}\wedge\hat{n}|_{{\cal H}_s}\wedge \bar{\hat{m}}&=& \frac{a\Delta R^3p\sin\theta}{\sqrt{2}s(r^2+a^2)}\d{^*t}\d^2\omega.
\end{eqnarray*}
Hence, the Hodge dual \eqref{CurrentConserved} restricted on ${\cal H}_s$ is
\begin{eqnarray*}
-\omega|_{{\cal H}_s}&=& \sqrt{\frac{\Delta\rho^2}{2}}R^2\vert\hat{\psi}_1\vert^2\d{^*t}\d^2\omega + \sqrt{\frac{\Delta\rho^2}{2}}\left( \frac{2}{s} - 1 \right)R^4\vert\hat{\psi}_0\vert^2 \d{^*t}\d^2\omega\\ 
&&-\frac{ia\bar{p}\Delta R^3\sin\theta}{\sqrt{2}s(r^2+a^2)}\hat{\psi}_1\bar{\hat{\psi}}_0\d{^*t}\d^2\omega + \frac{iap\Delta R^3\sin\theta}{\sqrt{2}s(r^2+a^2)}\bar{\hat{\psi}}_1 \hat{\psi}_0\d{^*t}\d^2\omega\\
&=& \sqrt{\frac{\Delta\rho^2R^4}{2}} \left( \vert\hat{\psi}_1\vert^2 + \left( \frac{2}{s} - 1 \right)R^2\vert\hat{\psi}_0\vert^2 \right)\d{^*t}\d^2\omega\\
&&- \left( \frac{ia\bar{p}\Delta R^3\sin\theta}{\sqrt{2}s(r^2+a^2)}\hat{\psi}_1\bar{\hat{\psi}}_0 - \frac{iap\Delta R^3\sin\theta}{\sqrt{2}s(r^2+a^2)}\bar{\hat{\psi}}_1\hat{\psi}_0 \right)\d{^*t}\d^2\omega.
\end{eqnarray*}
By the inequality that
\begin{eqnarray*}
\left| \frac{ia\bar{p}\Delta R^3\sin\theta}{\sqrt{2}s(r^2+a^2)}\hat{\psi}_1\bar{\hat{\psi}}_0 - \frac{iap\Delta R^3\sin\theta}{\sqrt{2}s(r^2+a^2)}\bar{\hat{\psi}}_1\hat{\psi}_0 \right| &\leq& \frac{\sqrt{2}a|p|\Delta R^3}{s(r^2+a^2)}\vert\hat{\psi}_1\vert\vert\hat{\psi}_0\vert \\
&\leq & \sqrt{\frac{\Delta\rho^2 R^4}{2}}\left( \frac{2a^2\Delta}{s(r^2+a^2)^2}\vert\hat{\psi}_1\vert^2 + \frac{R^2}{2s}\vert\hat{\psi}_0\vert^2  \right),
\end{eqnarray*}
we can estimate
\begin{equation}\label{EquivalentEnergy1}
4\int_{{\cal H}_s}\sqrt{\frac{\Delta\rho^2R^4}{2}} \left( \left( 1 - \frac{2a^2\Delta}{s(r^2+a^2)^2} \right)\vert\hat{\psi}_1\vert^2 + R^2\left( \frac{3}{2s} -1 \right)\vert\hat{\psi}_0\vert^2 \right)\d{^*t}\d^2\omega \leq {\cal E}_{{\cal H}_s}(\hat{\psi}_A)
\end{equation}
and
\begin{equation}\label{EquivalentEnergy2}
{\cal E}_{{\cal H}_s}(\hat{\psi}_A) \leq 4\int_{{\cal H}_s}\sqrt{\frac{\Delta\rho^2R^4}{2}} \left( \left( 1 + \frac{2a^2\Delta}{s(r^2+a^2)^2} \right)\vert\hat{\psi}_1\vert^2 + R^2\left( \frac{5}{2s} -1 \right)\vert\hat{\psi}_0\vert^2 \right)\d{^*t}\d^2\omega.
\end{equation}
For $\vert{^*t_0}\vert$ large enough, using Lemma \ref{epsilonestimates}: 
\begin{equation}\label{inequality1}
\frac{1-\varepsilon}{s(r^2+a^2)} < \frac{\Delta}{s(r^2+a^2)^2}<\frac{1}{s(r^2+a^2)}.
\end{equation}
Moreover
\begin{equation}\label{inequality2}
\frac{1}{2s}\leq \frac{3}{2s}-1, \, \frac{5}{2s}-1<\frac{5}{2s}, \, \frac{1}{s} = \frac{r_*}{\vert{^*t}\vert}, \, 0< \frac{r_*}{r} < 1+\varepsilon \,\,\, \hbox{for} \,\,\, {^*t_0}\ll -1.
\end{equation}
Combining the inequalities \eqref{EquivalentEnergy1}-\eqref{inequality2}, we obtain simpler equivalent expressions of the energy fluxes on ${\cal H}_{s}, \, (0<s\leq 1),$ as follows
$${\cal E}_{{\cal H}_s}(\hat{\psi}_A) \simeq \int_{{\cal H}_s} \left( \frac{R}{\vert{^*t}\vert}\vert\hat{\psi}_0\vert^2 + \vert\hat{\psi}_1\vert^2  \right)\d{^*t}\d^2\omega.$$
Now taking $s\rightarrow 0$, whence $r\rightarrow +\infty$, and using the fact that $R^2/s\simeq R/\vert{^*t}\vert \rightarrow 0$, we obtain 
$${\cal E}_{{\cal H}_0}(\hat{\psi}_A) \simeq \int_{{\cal H}_s}\vert\hat{\psi}_1\vert^2\d{^*t}\d^2\omega.$$
\end{proof}

\section{Covariant derivative approach to the Peeling}
On Schwarzschild's spacetime the peeling for Dirac fields can be obtained using two approaches: the partial or covariant derivative, in detail see \cite{MaNi2012}. The significant difference between the two approaches is that the covariant derivative tranverse to $\scri^+$ can be controlled independently of the other covariant derivatives. On Kerr spacetime, this difference does not appear, since the derivative tranverse to $\scri^+$ must be controlled by the derivatives of all directions in both approaches. In this section, we will consider the peeling of Dirac fields by extending the covariant derivative approach in the work on Schwarzschild spacetime of Mason and Nicolas (see Appendix in \cite{MaNi2012}). The covariant derivative will act on full Dirac fields. To obtain the peeling at higher orders, we will use the set of five vector fields ${\cal B} = \left\{ X_i, \, i=0,1...4 \right\}$:
$$ X_0 = \partial_{^*t}, \, X_1 = \partial_{^*\varphi}, \, X_2 = \sin{^*\varphi}\partial_\theta + \cot\theta\cos{^*\varphi}\partial_{^*\varphi} \, ,$$
$$ X_3 = \cos{^*\varphi}\partial_\theta - \cot\theta\sin{^*\varphi}\partial_{^*\varphi}, \, X_4 = \partial_R \, ,$$
where $X_1, X_2$ and $X_3$ are tangent vectors to $2-$sphere $S^2_{\theta,{^*\varphi}}$.

\subsection{Curvature spinors}
On a spacetime $({\cal M},g)$ admitting a spin structure and equipped with the Levi-Civitta connection, the Riemann tensor $R_{abcd}$ can be decomposed as follows (see eq.(4.6.1) page 231 in Penrose and Rindler \cite{PeRi84}, Vol$1$): 
\begin{equation*}\label{decomposition_Riemann}
 R_{abcd} = X_{ABCD} \, \varepsilon_{A'B'} \varepsilon_{C'D'} + \Phi_{ABC'D'} \, \varepsilon_{A'B'} \varepsilon_{CD} + \bar{\Phi}_{A'B'CD} \, \varepsilon_{AB} \varepsilon_{C'D'} + \bar{X}_{A'B'C'D'} \, \varepsilon_{AB} \varepsilon_{CD},
\end{equation*} 
where $X_{ABCD}$ is a complete contraction of the Riemann tensor in its primed spinor indices
$$ X_{ABCD} = \frac{1}{4} R_{abcd} {\varepsilon}^{A'B'} {\varepsilon}^{C'D'},$$
and ${\Phi}_{ab} = {\Phi}_{(ab)}$ is the trace-free part of the Ricci tensor multiplied by $-1/2$:
\[ 2{\Phi}_{ab} = 6 {\Lambda} {g}_{ab} - {R}_{ab},~ {\Lambda} = \frac{1}{24} \mathrm{Scal}_{g}. \]
We set
\[ P_{ab} = \Phi_{ab} - \Lambda g_{ab}, \]
and
\[ X_{ABCD} = \Psi_{ABCD} + \Lambda \left( \varepsilon_{AC} \varepsilon_{BD} + \varepsilon_{AD} \varepsilon_{BC} \right),~ \Psi_{ABCD} = X_{(ABCD)} = X_{A(BCD)}. \]
Under a conformal rescaling $\hat g=\Omega^2 g$, we have (see Penrose and Rindler \cite{PeRi84}, Vol$2$, pages 120-123):
\begin{gather*}
\hat{\Psi}_{ABCD} = \Psi_{ABCD},\\
\hat{\Lambda} = \Omega^{-2} \Lambda + \frac{1}{4} \Omega^{-3} \square \Omega, ~\square = \nabla^a \nabla_a, \\
\hat{P}_{ab} = P_{ab} - \nabla_b \Upsilon_a  + \Upsilon_{AB'} \Upsilon_{BA'},~ \mbox{with } \Upsilon_a = \Omega^{-1} \nabla_a \Omega = \nabla_a \log \Omega.
\end{gather*}

\begin{lem}\label{shortly.Psi}
The scalar curvature of the rescaled star-Kerr metric is
$$ \hat\Lambda=\frac{Mr-a^2}{2\rho^2}.$$
And the simpler expression of the rescaled trace-free Ricci tensor is
\begin{eqnarray*}
\hat{\Phi}_{ab}\d x^a\d x^b &=& A_1 \d{^*t}^2 + A_2 \d{^*t}\d R + A_3 \d{^*t}\d\theta + A_4\sin^2\theta\d{^*t}\d{^*\varphi} + A_5 \d R^2 + A_6\d R \d \theta \\
&& + A_7\sin^2\theta \d R \d {^*\varphi} + A_8\d \theta^2 + A_9\sin^2\theta\d \theta\d {^*\varphi} + A_{10}\sin^2\theta\d{^*\varphi}^2, 
\end{eqnarray*}
where $A_i(r,\theta), \, (i=1,2...10),$ are bounded functions in $\Omega_{^*t_0}^+$. 
\end{lem}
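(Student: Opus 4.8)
The plan is to use the fact that the Kerr metric $g$ solves the vacuum Einstein equations, so that $R_{ab}=0$ and hence $\Lambda=0$, $\Phi_{ab}=0$ and $P_{ab}=\Phi_{ab}-\Lambda g_{ab}=0$. With $\Omega=R=1/r$ and $\omega=\log\Omega=-\log r$, the conformal transformation laws recalled just above the lemma then reduce to
\[ \hat\Lambda=\tfrac14\Omega^{-3}\square\Omega,\qquad \hat P_{ab}=-\nabla_b\Upsilon_a+\Upsilon_{AB'}\Upsilon_{BA'}, \]
where $\square=\nabla^a\nabla_a$, $\nabla$ and $\Upsilon_a=\nabla_a\omega$ all refer to the original Kerr connection. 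Since $\hat\Phi_{ab}=\hat P_{ab}+\hat\Lambda\hat g_{ab}$, the whole statement reduces to evaluating these two quantities and expressing them in the coordinates $({^*t},R,\theta,{^*\varphi})$.

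First I would compute the scalar curvature. As $\Omega$ depends on $r$ alone and the Kerr metric has no $g^{tr}$ or $g^{r\varphi}$ cross terms, only $g^{rr}=-\Delta/\rho^2$ from \eqref{inverse} contributes to $\square\Omega$; using $\sqrt{|\det g|}=\rho^2\sin\theta$ (from $\det g=-\rho^4\sin^2\theta$) one gets
\[ \square\Omega=\frac{1}{\rho^2}\,\partial_r\!\left(-\Delta\,\partial_r\Omega\right)=\frac{2(Mr-a^2)}{\rho^2 r^3}, \]
whence $\hat\Lambda=\tfrac14 r^3\square\Omega=(Mr-a^2)/(2\rho^2)$, which is the first assertion. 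This part is exact and short, and it already shows $\hat\Lambda$ is bounded in $\Omega^+_{^*t_0}$ (indeed $\hat\Lambda=O(1/r)\to0$ at $\scri^+$).

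For the trace-free Ricci tensor the one-form $\Upsilon=\d\omega=-r^{-1}\d r$ is purely radial. The quadratic term $\Upsilon_{AB'}\Upsilon_{BA'}$ is then a symmetric multiple of $\Upsilon_a\Upsilon_b-\tfrac12 g_{ab}\Upsilon^c\Upsilon_c$ and is immediate, while the Hessian collapses to $\nabla_a\nabla_b\omega=r^{-2}\delta^r_a\delta^r_b+r^{-1}\Gamma^r_{ab}$, so that only the radial Christoffel symbols $\Gamma^r_{ab}$ of $g$ are needed. Assembling $\hat P_{ab}+\hat\Lambda\hat g_{ab}$ and transporting its components to the star-Kerr coordinates through the Jacobian of ${^*t}=t-r_*$, ${^*\varphi}=\varphi-\Lambda$, $R=1/r$ produces a symmetric $(0,2)$-tensor; written out in the coframe $\{\d{^*t},\d R,\d\theta,\d{^*\varphi}\}$ this is exactly the displayed ten-term expansion, the $\sin^2\theta$ factors being extracted (as in the metric \eqref{RescMet}) so that the coefficients remain regular at the poles. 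The remaining content is therefore the boundedness of the $A_i$.

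This boundedness is the main obstacle. In the star-Kerr frame the individual ingredients are singular near spacelike infinity: $\Upsilon_R=1/R$ diverges as $R\to0$, and the Jacobian entries $\partial r/\partial R=-r^2$, $\partial t/\partial R=-(r^2+a^2)r^2/\Delta$ and $\partial\varphi/\partial R=-ar^2/\Delta$ all carry positive powers of $r$. A direct expansion shows, for instance, that the $\d R^2$ coefficient receives contributions of order $r^2$ from the $tt$ and $rr$ components of $\hat P$ that cancel at leading order, leaving a bounded remainder; checking all ten coefficients in this way is the bulk of the work. The clean way to organise it is to note that $\hat g$ extends smoothly and non-degenerately across $\scri^+=\{R=0\}$, so its Ricci tensor, and hence $\hat\Phi_{ab}$, extends smoothly to $\scri^+$; since $({^*t},R,\theta,{^*\varphi})$ are regular coordinates there, this forces the required cancellations and bounds the $A_i$ on any compact piece up to $\scri^+$. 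It then only remains to control the $A_i$ uniformly as ${^*t}\to-\infty$, i.e. on approach to $i^0$, which follows from the comparison estimates of Lemma \ref{epsilonestimates} (boundedness of $r_*/r$, $\Delta/(r^2+a^2)$ and $-{^*t}R$ away from $0$ and $\infty$ for $|{^*t_0}|$ large).
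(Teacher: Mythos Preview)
Your computation of $\hat\Lambda$ and the setup for $\hat\Phi_{ab}$ (via $\hat P_{ab}=-\nabla_b\Upsilon_a+\Upsilon_{AB'}\Upsilon_{BA'}$, with the Hessian reducing to $r^{-2}\delta^r_a\delta^r_b+r^{-1}\Gamma^r_{ab}$) coincide with the paper's. Your tensorial identification $\Upsilon_{AB'}\Upsilon_{BA'}=\Upsilon_a\Upsilon_b-\tfrac12 g_{ab}\Upsilon^c\Upsilon_c$ is correct (it holds for any vector, via the $\varepsilon$-identity) and is cleaner than the paper's route, which instead computes the spinor components $\Upsilon_{\mathbf{AA'}}$ in the Newman--Penrose dyad and reassembles the tensor in the rescaled tetrad.

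The genuine difference is how boundedness of the $A_i$ is obtained. The paper carries out the explicit coordinate computation: it lists the nonzero $\Gamma^r_{\mathbf{ab}}$, passes to $({^*t},R,\theta,{^*\varphi})$ via the explicit Jacobian, and then reads off each coefficient, using $(\Delta/\rho^2)'_r\simeq 2MR^2$ to see the leading-order cancellations directly. Your alternative --- that $\hat g$ extends smoothly and nondegenerately to $\scri^+$, hence so does its Ricci tensor and $\hat\Phi_{ab}$ --- is valid and buys brevity; the paper's approach buys explicit control of each term (which it reuses later).

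Two small points on your version. First, your appeal to Lemma~\ref{epsilonestimates} for the limit ${^*t}\to-\infty$ is not needed: the $A_i$ depend only on $(R,\theta)$, and approaching $i^0$ inside $\Omega^+_{^*t_0}$ is again $R\to0$, so smoothness at $\{R=0\}$ already covers the whole domain. Second, regularity at the poles of a generic smooth symmetric $2$-tensor only forces the $\d\theta\,\d{^*\varphi}$ component to be $O(\sin\theta)$, not $O(\sin^2\theta)$; the paper's explicit calculation shows this component actually vanishes (neither $\Upsilon_{AB'}\Upsilon_{BA'}$, nor $\nabla_b\Upsilon_a$, nor $\hat g$ contains a $\d\theta\,\d{^*\varphi}$ term), which your soft argument does not supply on its own. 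You would need to add this observation to get the stated $A_9\sin^2\theta$ with bounded $A_9$.
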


\begin{proof}
Since the Kerr metric is Ricci flat, $\Phi_{ab}=P_{ab}=R_{ab}=\Lambda=0$. Recall from Equation \eqref{inverse} (Section 2.1), the inversed Kerr metric is
$$g^{-1} = \frac{1}{\rho^2} \left( \frac{\sigma^2}{\Delta}\partial_t^2 - \frac{2aMr}{\Delta}\partial_t\partial_{\varphi} - \Delta\partial_r^2 - \partial_{\theta}^2 - \frac{\rho^2 - 2Mr}{\Delta\sin^2\theta}\partial_{\varphi}  \right)$$
and $\det{g} = -\rho^4\sin^2\theta$. Therefore, we have
$$ \hat\Lambda= R^{-2}\Lambda + \frac{1}{4}R^{-3}\Box_g R=\frac{1}{4}\frac{r^3}{\sqrt{\vert\det{g}}\vert}\partial_r\left(\sqrt{\vert\det{g}\vert}g^{rr}\partial_r\frac{1}{r}\right) =\frac{Mr-a^2}{2\rho^2}.$$

Thus, the rescaled trace-free of Ricci tensor is
\begin{equation}\label{TraceRicciTensor}
\hat\Phi_{ab}=\hat P_{ab}+\hat\Lambda \hat g_{ab}= \Upsilon_{AB'} \Upsilon_{BA'} - \nabla_b \Upsilon_a  + \frac{Mr-a^2}{2\rho^2} \hat{g}.
\end{equation}
We need to calculate explicitly the first two terms in the above equation. Since $\Omega = R = 1/r$,
\[ \Upsilon_a \d x^a = -\frac{\d r}{r}.\]
We first determine the spinor components. Denoting $x^0 = t$, $x^1 =r$, $x^2 = \theta$ and $x^3 = \varphi$, we get
$$ \Upsilon_{AA'} = \frac{-1}{r} g^1_{AA'} = \frac{-1}{r} g^{11} \varepsilon_{AB} \varepsilon_{A'B'} g_1^{BB'} = \frac{\Delta}{r\rho^2} \varepsilon_{AB} \varepsilon_{A'B'} g_1^{BB'}. $$
Using the dual Newman-Penrose tetrad \eqref{DualNew0}, \eqref{DualNew1} and \eqref{DualNew2}, we obtain  
$$ g_1^\mathbf{BB'} = \left( \begin{array}{cc} {n_1} & {-\bar{m}_1} \\ {-m_1} & {l_1} \end{array} \right) = \sqrt{\frac{\rho^2}{2\Delta}} \left( \begin{array}{cc} 1&0 \\ 0&-1 \end{array} \right), $$
and thus
$$ \Upsilon_\mathbf{AA'} = - \frac{1}{r} \sqrt{\frac{\Delta}{2\rho^2}} \left( \begin{array}{cc} 1&0\\0&-1 \end{array} \right). $$
The non zero components of $\alpha_{ab}:=\Upsilon_{AB'}\Upsilon_{BA'}$ are 
$$ \alpha_{00'00'}=\Upsilon_{00'}\Upsilon_{00'}=\alpha_{11'11'}=\Upsilon_{11'}\Upsilon_{11'}=\frac{\Delta}{2r^2\rho^2},$$
$$ \alpha_{01'10'}=\Upsilon_{00'}\Upsilon_{11'}=\alpha_{10'01'}=\Upsilon_{11'}\Upsilon_{00'}=  -\frac{\Delta}{2r^2\rho^2}.$$
Therefore,
\begin{align}\label{cofficient-dR0}
\Upsilon_{AB'}\Upsilon_{BA'}\d x^a\d x^b&= \frac{\Delta}{2r^2\rho^2}  (l_al_b+n_an_b-m_a{\bar m}_b-{\bar m}_a m_b) \d x^a \d x^b \nonumber\\
&=  \frac{\Delta}{2\rho^2} (R^2{\hat l}_a{\hat l}_b+R^{-2}\hat n_a\hat n_b-\hat m_a{\bar {\hat m}}_b-{\bar {\hat m}}_a \hat m_b) \d x^a \d x^b \nonumber\\
&= \frac{\Delta}{2\rho^2} (R^2\hat{l}_a\hat{l}_b + R^{-2}\hat{n}_a\hat{n}_b - 2\Re(\hat{m}_a\bar{\hat{m}}_b)) \d x^a\d x^b.
\end{align}
Using the rescaled dual Newman-Penrose tetrad \eqref{dualNew1}, \eqref{dualNew2} and \eqref{dualNew3}, we obtain that
\begin{eqnarray}\label{EquationRicci1}
\Upsilon_{AB'}\Upsilon_{BA'} &=& \frac{\d R^2}{R^2} + \frac{1}{2}\left( \frac{\Delta}{\rho^2} \right)^2 \left[ \left(1 - \frac{a^2\sin^2\theta}{\Delta} \right)R^2\d{^*t}^2 + \frac{4aMR\sin^2\theta}{\Delta}\d{^*t}\d{^*\varphi} \right. \nonumber\\
&&\hspace{3cm} \left.  - \frac{R^2\rho^4}{\Delta}\d \theta^2 - \left( \frac{(r^2+a^2)^2}{\Delta} + a^2\sin^2\theta \right)R^2\sin^2\theta \d{^*\varphi}^2  \right. \nonumber\\
&&\hspace{6,2cm}\left. - \frac{2\rho^2}{\Delta}\d{^*t}\d R + \frac{2a\rho^2}{\Delta}\sin^2\theta \d{^*\varphi}\d R \right].
\end{eqnarray}
On the other hand,
\begin{equation}\label{cofficient-dR2}
 \nabla_b \Upsilon_a \d x^a \d x^b = \nabla_b \left( -\frac{\d r}{r} \right) \d x^b = \frac{\d r^2}{r^2} + \frac{1}{r} \Gamma^1_\mathbf{ab} \d x^\mathbf{a} \d x^\mathbf{b}=\frac{\d R^2}{R^2}+\frac{1}{r} \Gamma^1_\mathbf{ab} \d x^\mathbf{a} \d x^\mathbf{b}.
\end{equation}
Among the Christoffel symbols
\[ \Gamma^1_\mathbf{ab} = \frac{1}{2} g^{1\mathbf{c}} \left( \frac{\partial g_\mathbf{ac}}{\partial x^\mathbf{b}} + \frac{\partial g_\mathbf{bc}}{\partial x^\mathbf{a}} - \frac{\partial g_\mathbf{ab}}{\partial x^\mathbf{c}} \right) =  -\frac{\Delta}{2\rho^2} \left( \frac{\partial g_{\mathbf{a}1}}{\partial x^\mathbf{b}} + \frac{\partial g_{\mathbf{b}1}}{\partial x^\mathbf{a}} - \frac{\partial g_\mathbf{ab}}{\partial r} \right), \]
the non-zero ones are
$$ \Gamma_{00}^1=\frac{\Delta}{2\rho^2}\left(1- \frac{2Mr}{\rho^2}\right)_r' = \frac{\Delta}{2\rho^2}\left( \frac{\Delta}{\rho^2} \right)'_r, \, \Gamma_{03}^1=2aM\sin^2\theta \frac{\Delta}{2\rho^2}\left(\frac{r}{\rho^2}\right)_r' = 2aM\sin^2\theta \frac{\Delta(\rho^2-2r^2)}{2\rho^6},$$
$$\Gamma_{11}^1 = \frac{\Delta}{2\rho^2}\left(\frac{\rho^2}{\Delta}\right)_r' = -\frac{\rho^2}{2\Delta}\left( \frac{\Delta}{\rho^2} \right)'_r , \, \Gamma_{12}^1=\frac{\Delta}{2\rho^2}\left(\frac{\rho^2}{\Delta}\right)_\theta' = -\frac{a^2\sin2\theta}{2\rho^2}, \, \Gamma_{22}^1= -\frac{\Delta}{2\rho^2} (\rho^2)'_r = -\frac{r\Delta}{\rho^2}, $$
$$ \Gamma_{33}^1=-\frac{\Delta}{2\rho^2}\left(\frac{\sigma^2\sin^2\theta}{\rho^2}\right)_r' = -\frac{\Delta}{2\rho^2}\sin^2\theta\left( 2r + 2Ma^2\sin^2\theta \frac{\rho^2-2r^2}{\rho^4} \right).$$
Therefore,
\begin{eqnarray}
&&\nabla_b \Upsilon_a \d x^a \d x^b \nonumber\\
&=& \frac{\d R^2}{R^2} + \frac{\Delta}{2r\rho^2} \left( \left( \frac{\Delta}{\rho^2}\right)'_r \left( \d t^2 - \frac{\rho^4}{\Delta^2}\d r^2  \right) + 2aM\sin^2\theta \frac{\rho^2-2r^2}{\rho^4} \d t\d\varphi  \right) \nonumber\\
&&- \frac{\Delta}{2r\rho^2} \left( \frac{a^2\sin 2\theta}{\Delta} \d t\d \theta + 2r\d \theta^2 + \sin^2\theta\left( 2r + 2Ma^2\sin^2\theta \frac{\rho^2-2r^2}{\rho^4} \right) \d\varphi^2 \right).\label{Upsilon}
\end{eqnarray}
Replacing $\d t, \, \d \varphi$ and $\d r$ in \eqref{Upsilon} by
$$ \d t = \d {^*t} + \frac{r^2+a^2}{\Delta}\d r \, , \, \d \varphi = \d{^*\varphi} + \frac{a}{\Delta}\d r \,\,\, \hbox{and} \,\,\, \d r = -\frac{1}{R^2}\d R,$$
we get
\begin{eqnarray}\label{EquationRicci2}
&&\nabla_b\Upsilon_a \d x^a \d x^b \nonumber\\
&=& \frac{\d R^2}{R^2} + \frac{\Delta}{2r\rho^2} \left( \frac{\Delta}{\rho^2} \right)'_r\left( \d{^*t}^2 - \frac{r^2+a^2}{R^2\Delta}\d {^*t}\d R + \frac{(2\rho^2+a^2\sin^2\theta)a^2\sin^2\theta}{R^4\Delta^2}\d R^2 \right) \nonumber\\
&&+ aM\Delta\sin^2\theta \frac{\rho^2-2r^2}{r\rho^6} \left( \d{^*t}\d{^*\varphi} - \frac{a}{R^2\Delta}\d{^*t}\d R - \frac{r^2+a^2}{R^2\Delta}\d R\d {^*\varphi} + \frac{a(r^2+a^2)}{R^4\Delta^2}\d R^2 \right) \nonumber\\
&&-\frac{a^2\sin 2\theta}{2r\rho^2} \left( \d{^*t}\d \theta - \frac{r^2+a^2}{R^2\Delta}\d R\d \theta \right) - \frac{\Delta}{\rho^2}\d\theta^2 \nonumber\\
&&-\frac{\Delta\sin^2\theta}{r\rho^2}\left( r + Ma^2\sin^2\theta\frac{\rho^2-2r^2}{\rho^4} \right) \left( \d{^*\varphi}^2 + \frac{a^2}{R^4\Delta^2}\d R^2 + \frac{2a}{R^2\Delta}\d R \d{^*\varphi} \right).
\end{eqnarray}
Plugging the equations \eqref{EquationRicci1} and \eqref{EquationRicci2} into \eqref{TraceRicciTensor} and noting that
$$\left( \frac{\Delta}{\rho^2} \right)'_r = \frac{2r(Mr-a^2) + 2a^2\cos^2\theta(r-M)}{\rho^4} \simeq 2MR^2,$$
for $\vert{^*t_0}\vert$ (therefore $r$) large enough, we can find the simpler expression of the rescaled trace-free Ricci tensor as follows
\begin{eqnarray*}
\hat{\Phi}_{ab}\d x^a\d x^b &=& A_1 \d{^*t}^2 + A_2 \d{^*t}\d R + A_3 \d{^*t}\d\theta + A_4\sin^2\theta\d{^*t}\d{^*\varphi} + A_5 \d R^2 + A_6\d R \d \theta \\
&& + A_7\sin^2\theta \d R \d {^*\varphi} + A_8\d \theta^2 + A_9\sin^2\theta\d \theta\d {^*\varphi} + A_{10}\sin^2\theta\d{^*\varphi}^2, 
\end{eqnarray*}
where $A_i(r,\theta), \,\, (i=1,2...10),$ are bounded functions in our neighbourhood $\Omega_{^*t_0}^+$. This completes the proof.
\end{proof}

\subsection{The conservation laws}
Recall that the rescaled Weyl equation $\hnabla^{AA'}\hat{\psi}_A = 0$ admits the simple conservation law 
\begin{equation}\label{conservation0}
\hnabla^a\hat{J}_a = 0.
\end{equation}

To obtain the conservation laws of the rescaled fields at higher orders, we commute the covariant derivative $\hat\nabla_{X_i}$ along a vector field $X_i \in {\cal B}$ into the rescaled Weyl equation. We then have (see \cite{MaNi2012}, Equation (41), page 25):
\begin{eqnarray}\label{commute-equation}
\hnabla^{AA'} \left( \hnabla_{X_i} \hat{\psi}_A \right) &=&\left[ \hnabla^{AA'} , \hnabla_{X_i} \right] \hat{\psi}_A \nonumber\\
&=& \left( \hnabla^a ({X_i})^b \right) \hnabla_b \hat{\psi}_A  + (X_i)_b \hat{\Phi}^{ba} \hat{\psi}_A - \hat\Lambda (X_i)^a \hat{\psi}_A \nonumber\\
&=& \left( \hnabla^a (X_i)^b \right) \hnabla_b \hat{\psi}_A  + (X_i)_b \hat{\Phi}^{ba} \hat{\psi}_A - 3\frac{Mr-a^2}{\rho^2} (X_i)^a \hat{\psi}_A.
\end{eqnarray}

Note that by straightforward calculations, for every vector field $(X_i)^b \in {\mathcal B}$ the first term $\left( \hnabla^a (X_i)^b \right) \hnabla_b \hat{\psi}_A$ can be expressed as a bounded linear combination of $(\partial_{{\bf a}})^a\hnabla_{X_i}\hat\psi_A \, (X_i \in {\mathcal B})$ for $\vert{^*t_0}\vert$ large enough. For the other terms of \eqref{commute-equation}, we observe that
\begin{align*}
 X_{i \, b} \hat{\Phi}^{ba} \hat{\psi}_A - 3\frac{Mr-a^2}{\rho^2} (X_i)^a \hat{\psi}_A &= (X_i)^b \hat g^{ac} \hat\Phi_{bc} \hat\psi_A - 3\frac{Mr-a^2}{\rho^2} (X_i)^a \hat{\psi}_A \\
&= (X_i)^{\bb}\hat g^{\ba\bc} \hat\Phi_{\bb\bc} (\partial_{\bf a})^a \hat\psi_A - 3\frac{Mr-a^2}{\rho^2} (X_i)^a \hat{\psi}_A.
\end{align*}
The simpler expression of $\hat{\Phi}_{ba}\d x^b \d x^a$ in Lemma \ref{shortly.Psi} and the formula \eqref{RescMet} (resp. \eqref{InvertResMetric}) of $\hat g$ (resp. $\hat g^{-1}$) shows that the coefficients $(X_i)^{\bb}\hat g^{\ba\bc} \hat\Phi_{\bb\bc}$ are bounded for $\vert{^*t_0}\vert$ large enough, except the one with the angular singularity, i.e $\cot\theta \hat g^{33} \hat\Phi_{33}$. However, this singularity is eliminated, since $(\partial_{^*\varphi})^a$ is of order one in $\sin\theta$ (see the matrix representation \eqref{eqcor3} of $(\partial_{^*\varphi})^a$ below). Thus, the spinor field $\hnabla^{AA'}(\hat\nabla_{X_i}{\hat\psi}_A)$ has a simpler expression as follows
\begin{eqnarray}
\hnabla^{AA'} \left( \hnabla_{X_i} \hat{\psi}_A \right) = \left[ \hnabla^{AA'} , \hnabla_{X_i} \right] \hat{\psi}_A &=& \sum_{j=0}^4T_j(r,\theta) (\partial_{\bf a})^a  \hnabla_{X_j} \hat{\psi}_A+ P_i(r,\theta)(\partial_{\bf a})^a{\hat\psi}_A \nonumber\\
&& - 3\frac{Mr-a^2}{\rho^2} ({X_i})^a \hat\psi_A,  \label{conser0}
\end{eqnarray}
where $T_j(r,\theta)$ and $\sin\theta P_i(r,\theta)$ are bounded functions. We also obtain the equation for higher order derivatives by means of the commutator expansion
\begin{eqnarray}
\hnabla^{AA'} \left( \hnabla_{X_i}^k \hat{\psi}_A \right) &=& \left[ \hnabla^{AA'} , \hnabla_{X_i}^k \right] \hat{\psi}_A \nonumber = \sum_{p=0}^{k-1} \hnabla_{X_i}^{k-p-1} \left[  \hnabla^{AA'} , \hnabla_{X_i} \right] \hnabla_{X_i}^p \hat{\psi}_A \nonumber \\
&=& \sum_{p=0}^{k-1} \hnabla_{X_i}^{k-p-1} \left( T_j(r,\theta)(\partial_{\bf a})^a \hnabla_{X_j}\hnabla_{X_i}^{p} \hat{\psi}_A \right) \nonumber \\
&& + \sum_{p=0}^{k-1} \hnabla_{X_i}^{k-p-1} \left(   P_i(r,\theta)(\partial_{\bf a})^a{\hat\psi}_A - 3\frac{Mr-a^2}{\rho^2} (X_i)^a  \hnabla_{X_i}^p\hat{\psi}_A \right). \label{commutator_high_order}
\end{eqnarray}
Now we have the values of the Infeld-van der Waerden symbols as follows
\begin{gather}
\hat{g}_0^\mathbf{AA'} = \left( \begin{array}{cc} {\hat{n}_0} & {-\bar{\hat{m}}_0} \\ {-\hat{m}_0} & {\hat{l}_0} \end{array} \right) =  \left( \begin{array}{cc} R^2\sqrt{\frac{\Delta}{2\rho^2}} & \frac{iaR\sin\theta}{\sqrt 2\bar{p}}\\ -\frac{iaR\sin\theta}{\sqrt{2}p} & -\sqrt{\frac{\Delta}{2\rho^2}} \end{array} \right), \label{eqcor0}\\
 \hat{g}_1^\mathbf{AA'} = \left( \begin{array}{cc} {\hat{n}_1} & {-\bar{\hat{m}}_1} \\ {-\hat{m}_1} & {\hat{l}_1} \end{array} \right) = - \sqrt{\frac{2\rho^2}{\Delta}}\left( \begin{array}{cc} 1 & 0\\ 0 & 0 \end{array} \right), \label{eqcor1}\\
 \hat{g}_2^\mathbf{AA'} = \left( \begin{array}{cc} {\hat{n}_2} & {-\bar{\hat{m}}_2} \\ {-\hat{m}_2} & {\hat{l}_2} \end{array} \right) = \frac{R\rho^2}{\sqrt 2}\left( \begin{array}{cc} 0& \frac{1}{\bar{p}}\\\frac{1}{p}& 0 \end{array} \right), \label{eqcor2}\\
\hat{g}_3^\mathbf{AA'} = \left( \begin{array}{cc} {\hat{n}_3} & {-\bar{\hat{m}}_3} \\ {-\hat{m}_3} & {\hat{l}_3} \end{array} \right) = \left( \begin{array}{cc} -aR^2\sin^2\theta\sqrt{\frac{\Delta}{2\rho^2}} & - \frac{iR\sin\theta(r^2+a^2)}{p\sqrt 2} \\ \frac{iR\sin\theta(r^2+a^2)}{p\sqrt 2} & a\sin^2\theta\sqrt{\frac{\Delta}{2\rho^2}} \end{array} \right). \label{eqcor3}
\end{gather}
Therefore, we can see that the components of the matrices ${\hat g}_{i}^\mathbf{AA'} \, (i=0,1,2,3)$ are bounded. Thus we can write $\hat{g}^{\bf AA'}_i$ as
 \[ \hat{g}_{i}^\mathbf{AA'}= \left( \begin{array}{cc} a_{i}(r,\theta) & b_{i}(r,\theta)\\ c_{i}(r,\theta)& d_{i}(r,\theta) \end{array} \right), \]
where the functions $a_{i},b_{i},c_{i}$ and $d_{i}$ are bounded. It follows  
\begin{equation}\label{spin-matrix}
 (\partial_{x^i})^a{\hat\psi}_A=-a_{i}(r,\theta){\hat\psi}_0 o^{A'}-b_{i}(r,\theta){\hat\psi}_0 {\iota}^{A'}+c_{i}(r,\theta){\hat\psi}_1 o^{A'}+d_{i}(r,\theta){\hat\psi}_1 {\iota}^{A'}, 
\end{equation}
where $x^i = {^*t},\, R,\, {\theta},\, {^*\varphi}$. 

Although, the vector field $(X_2)^a$ (resp. $(X_3)^a$) has the angular singularity along $\partial_{^*\varphi}$, the coefficients of $(X_2)^a\hat{\psi}_A$ (resp. $(X_3)^a\hat{\psi}_A$) are still bounded due to the fact that $(\partial_{^*\varphi})^a$ is of order one in $\sin\theta$ (see equation \eqref{eqcor3}). Therefore, we can also obtain the same equation as \eqref{spin-matrix} for $(X_2)^a\hat{\psi}_A$ (resp. $(X_3)^a\hat{\psi}_A$).

Finally, using equations \eqref{conser0} and \eqref{spin-matrix}, we obtain the conversation law of ${\hat\nabla}_{X_i}{\hat\psi}_A$ as follows
\begin{lem}\label{conser1}
For every vector field $X_i\in \mathcal{B}$ and $\vert{^*t_0}\vert$ large enough, the conservation law of the Dirac fields at first order has the simpler form
\begin{gather}
\hnabla^{AA'} \left( ( \hnabla_{X_i} \hat{\psi}_A ) (\hnabla_{X_i} \bar{\hat{\psi}}_{A'} ) \right) \nonumber\\
= 2\Re \sum_{j=0}^4\left( B_{1j} (\hnabla_{X_j} \hat\Psi )_0  \overline{(\hnabla_{X_i} \hat\Psi)_0} + B_{2j} \hat{\psi}_0 \overline{(\hnabla_{X_i} \hat{\Psi})_{0}} + B_{3j} (\hnabla_{X_j} \hat\Psi )_1 \overline{(\hnabla_{X_i}\hat\Psi)_1} \right) \nonumber\\
+ 2 \Re \sum_{j=0}^4 \left( B_{4j} \hat{\psi}_1 \overline{(\hnabla_{X_i} \hat{\Psi})_{1} } + B_{5j} \hat{\psi}_0 \overline{(\hnabla_{X_i} \hat{\Psi})_{1} } + B_{6j} \hat{\psi}_1 \overline{(\hnabla_{X_i} \hat{\Psi})_{0} }  \right) \nonumber\\
+ 2\Re \sum_{j=0}^4 \left( B_{7j}({\hat\nabla}_{X_j}{\hat\Psi})_0 \overline{({\hat\nabla}_{X_i}{\hat\Psi})_1} + B_{8j}({\hat\nabla}_{X_j}{\hat\Psi})_1\overline{({\hat\nabla}_{X_i}{\hat\Psi})_0} \right), \label{ConservationCovariant}
\end{gather}
where the coefficient functions $B_{kj}(r,\theta), \,\, (k=1,2...8),$ are bounded. Furthermore, $B_{1j}$ is of order one in $R$. 
\end{lem}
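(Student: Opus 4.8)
\emph{Proof idea.} The left-hand side of \eqref{ConservationCovariant} is the divergence of the current attached to the commuted spinor $\phi_A:=\hnabla_{X_i}\hat\psi_A$, namely $\hat J^a_{(i)}=\phi^A\bar\phi^{A'}$. The plan is to mimic the derivation of the basic conservation law \eqref{conservation0}, but now retaining a source. First I would record the Leibniz identity
\[
\hnabla^{AA'}\!\big((\hnabla_{X_i}\hat\psi_A)(\hnabla_{X_i}\bar{\hat\psi}_{A'})\big)=2\,\Re\big(\bar\phi_{A'}\,\hnabla^{AA'}\phi_A\big),
\]
which is proved exactly as for \eqref{conservation0}, the two terms of the Leibniz rule being complex conjugates of one another; the only difference is that $\hnabla^{AA'}\phi_A$ no longer vanishes. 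The crucial reduction is that, since $\hnabla^{AA'}\hat\psi_A=0$, the source equals the commutator $[\hnabla^{AA'},\hnabla_{X_i}]\hat\psi_A$, a \emph{first order} operator in $\hat\psi_A$ whose explicit form is \eqref{conser0}: a combination with bounded coefficients $T_j$ and $\sin\theta\,P_i$ of the first derivatives $\hnabla_{X_j}\hat\psi_A$ and of $\hat\psi_A$, plus the curvature term $-3\frac{Mr-a^2}{\rho^2}(X_i)^a\hat\psi_A$.

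Next I would expand both factors on the right in the spin dyad. Writing $\phi_A=(\hnabla_{X_i}\hat\Psi)_1\,\hat o_A-(\hnabla_{X_i}\hat\Psi)_0\,\hat\iota_A$ and conjugating, $\bar\phi_{A'}$ carries $\overline{(\hnabla_{X_i}\hat\Psi)_1}$ along $\bar{\hat o}_{A'}$ and $-\overline{(\hnabla_{X_i}\hat\Psi)_0}$ along $\bar{\hat\iota}_{A'}$, while the source is decomposed through \eqref{spin-matrix} (and its analogue for $\hnabla_{X_j}\hat\psi_A$) together with the Infeld--van der Waerden symbols \eqref{eqcor0}--\eqref{eqcor3}. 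Contracting and using the dyad relations $\bar{\hat o}^{A'}\bar{\hat\iota}_{A'}=-1$, $\bar{\hat o}^{A'}\bar{\hat o}_{A'}=\bar{\hat\iota}^{A'}\bar{\hat\iota}_{A'}=0$ kills half of the would-be cross terms and sorts the survivors automatically into the eight bilinear families appearing in \eqref{ConservationCovariant}. Each $B_{kj}$ is then read off as a product of a $T_j$ (resp.\ a $P_i$, resp.\ the curvature factor), one entry of some $\hat g_k^{AA'}$, and a dyad constant.

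Boundedness of the $B_{kj}$ for $|{^*t_0}|$ large follows from three facts already available: $T_j$ and $\sin\theta\,P_i$ are bounded by the discussion preceding \eqref{conser0}; the curvature contribution is controlled by Lemma~\ref{shortly.Psi}, since $\hat\Phi_{ab}$ has bounded components and $\frac{Mr-a^2}{\rho^2}\simeq MR$ is bounded; and the entries of \eqref{eqcor0}--\eqref{eqcor3} are bounded. The only singular ingredient, the $\cot\theta$ coming from $X_2,X_3$ and from $\hat\alpha,\hat\beta$, is handled as in the comment after \eqref{spin-matrix}: $(\partial_{^*\varphi})^a$ is of order one in $\sin\theta$ (see \eqref{eqcor3}), so each such factor is multiplied by $\sin\theta$ and the poles cancel.

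The main obstacle is the final claim that $B_{1j}$ is of order one in $R$. By the bookkeeping above, $B_{1j}$ — the coefficient of $(\hnabla_{X_j}\hat\Psi)_0\,\overline{(\hnabla_{X_i}\hat\Psi)_0}$, the pairing of two $\hat o^A\bar{\hat o}^{A'}$ (that is, $\hat l$-aligned, hence $\partial_R$-aligned and transverse to $\scri^+$) components — arises solely from the $\bar{\hat o}^{A'}$ part of the source, i.e.\ from the ``$\hat n_k$'' entries of \eqref{eqcor0}--\eqref{eqcor3}, weighted by the $T_j$. All of these are of order $R^2$ except $\hat n_1=-\sqrt{2\rho^2/\Delta}=O(1)$, which corresponds to the purely transverse direction $(\partial_R)^a\propto\hat l^a$. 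The delicate point is thus to show that this one potentially $O(1)$ contribution is actually $O(R)$: one must inspect the factor $\hnabla^a(X_i)^b$ multiplying the transverse derivative $\hnabla_R(\hnabla_{X_j}\hat\psi_A)$ in the commutator and verify that it carries a compensating power of $R$. This is the conformal manifestation of the nullity of $\scri^+$ (equivalently $\hat g^{RR}=O(R^2)$ in \eqref{InvertResMetric}, the $\partial_R$-coefficient of $\hat n^a\partial_a$ in \eqref{resNew1} being $O(R^2)$), and it is exactly the $R$-gain required for the subsequent energy estimates to close, in accordance with the weight $R/|{^*t}|$ carried by $|\hat\psi_0|^2$ in Lemma~\ref{EquivalentSimplerEnergy}. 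Once this is checked, the form \eqref{ConservationCovariant} together with all the stated properties of the $B_{kj}$ follows.
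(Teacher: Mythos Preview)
Your outline is essentially the paper's own proof: reduce the divergence to $2\Re\big(\overline{\hnabla_{X_i}\hat\psi_{A'}}\,[\hnabla^{AA'},\hnabla_{X_i}]\hat\psi_A\big)$, insert \eqref{conser0}, expand both factors through \eqref{spin-matrix} and the Infeld--van der Waerden symbols \eqref{eqcor0}--\eqref{eqcor3}, and read off the eight bilinear families; boundedness and the handling of the $\cot\theta$ singularity are exactly as you describe.

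The one place where you stop short of the paper is the actual verification that $B_{1j}=O(R)$. You correctly isolate the only dangerous channel as the $(\partial_R)^a$ component of the free index (where $a_1=\hat n_1=-\sqrt{2\rho^2/\Delta}=O(1)$), but the heuristic ``$\hat g^{RR}=O(R^2)$'' does not by itself close the argument: the coefficient of $(\partial_R)^a$ in $\hnabla^a(X_i)^b$ is $\hat g^{1\bd}\hnabla_{\bd}(X_i)^b$, and $\hat g^{10}=-(r^2+a^2)/\rho^2=O(1)$ also enters. The paper settles this by an explicit Christoffel computation (carried out for the representative case $X_i=X_4$), checking that
\[
\hat g^{1\bd}\hat\Gamma^{1}_{\bd 1}=\tfrac12\,\hat g^{1\bd}\hat g^{1\be}\Big(\partial_R\hat g_{\bd\be}+\partial_{\bd}\hat g_{1\be}-\partial_{\be}\hat g_{\bd 1}\Big)
\]
is indeed of order one in $R$ directly from \eqref{RescMet} and \eqref{InvertResMetric}. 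A minor wording issue: what one inspects is the $(\partial_R)^a$ \emph{component of the source vector} (the free $A'$ index), not a second derivative $\hnabla_R(\hnabla_{X_j}\hat\psi_A)$; the commutator is first order in $\hat\psi_A$.
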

\begin{proof}
Since the coefficient functions $B_{kj},\,\, (k=1,2...8),$ are clearly bounded, we only need to prove that the coefficient $B_{1j}$ is of order one in $R$. Indeed, using the expression \eqref{spin-matrix} of $( \partial_{x^k} )^a$, we establish that for two spinor fields $\hat\psi$ and $\hat\phi$ :
$$(\partial_{x^k})^a{\hat\psi}_A \hat\phi_{A'}= - a_{k}{\hat\psi}_0 \hat\phi_0 + b_{k}{\hat\psi}_0 \hat\phi_1 + c_{k}{\hat\psi}_1 \hat\phi_0 - d_{k}{\hat\psi}_1 \hat\phi_1.$$
We therefore need to show that $a_k$ is of order one in $R$. If $k = 0$ or $k = 3$, the matrix representations \eqref{eqcor0} and \eqref{eqcor3} yield that $a_0$ and $a_3$ are of order greater than or equal to one in $R$. If $k = 2$, the matrix representation \eqref{eqcor2} yields that $a_2 = 0$, so the first term disappears. Finally, the case $k = 1$ is the trickiest one. We have
\begin{align*}
(\hnabla^a (X_4)^b)\partial_a\otimes\partial_b &= \left( \hat g^{\ba\bd}\partial_{\bd} (X_4)^{1} + \hat g^{\ba\bd} \hat \Gamma^{1}_{\bd 1} (X_4)^{1} \right)\partial_a\otimes\partial_R  = \hat g^{\ba\bd} \hat \Gamma^{1}_{\bd1} \partial_a \otimes \partial_R \\
& = \frac{1}{2}\hat g^{\ba\bd}\hat g^{1\be} \left( \frac{\partial \hat g_{\bd\be}}{\partial R} + \frac{\partial \hat g_{1\be}}{\partial{x^{\bd}}} - \frac{\partial \hat g_{\bd 1}}{\partial {x^{\be}}} \right)\partial_a \otimes \partial_R.
\end{align*}
For the cases $\ba =0$, $\ba=2$ and $\ba=3$, using again the matrix representations \eqref{eqcor0}, \eqref{eqcor2} and \eqref{eqcor3} of $(\partial_{x^{\ba}})^a$, we have the same conclusion as above. For $\ba =1$, using the formula \eqref{RescMet} of the rescaled metric $\hat g$ and the one \eqref{InvertResMetric} of $\hat{g}^{-1}$, we obtain that 
$$\hat g^{1\bd}\hat g^{1\be} \left( \frac{\partial \hat g_{\bd\be}}{\partial R} + \frac{\partial \hat g_{1\be}}{\partial{x^{\bd}}} - \frac{\partial \hat g_{\bd 1}}{\partial {x^{\be}}} \right)$$
is of order one in $R$. Therefore, the coefficients of $( \hnabla^a (X_4)^b )\hnabla_b \hat\psi_A$ are of order one in $R$. This completes the proof.
\end{proof}

\subsection{Energy estimates and peeling}
Integrating the conservation law \eqref{conservation0} of the rescaled massless Dirac fields at the zero order $\hnabla^a\hat{J}_a =0$ on $\Omega_{^*t_0}^+$, and using the equation \eqref{Stokesformula} we get
\begin{equation}
{\cal E}_{\scri^+_{^*t_0}}(\hat\psi_A) + {\cal E}_{{\cal S}_{^*t_0}}(\hat\psi_A) = {\cal E}_{{\cal H}_1}(\hat\psi_A).
\end{equation}
Now integrating the conservation law \eqref{ConservationCovariant} of the rescaled fields at the first order on the domain 
$$\Omega_{^*t_0}^{s_1,s_2}:= \Omega_{^*t_0}^+ \cap \left\{ s_1 < s < s_2 \right\} \,\,\, \hbox{with} \,\,\, 0\leq s_1<s_2 \leq 1,$$ 
and using again the equation \eqref{Stokesformula} we get
\begin{gather}
\left| {\cal E}_{{\cal H}_{s_1}} (\hnabla_{X_i}\hat{\psi}_A) + {\cal E}_{{\cal S}_{^*t_0}^{s_1,s_2}} (\hnabla_{X_i} \hat{\psi}_A) - {\cal E}_{{\cal H}_{s_2}} (\hnabla_{X_i}\hat\psi_A) \right| \nonumber\\
\lesssim \int_{s_1}^{s_2} \int_{{\cal H}_s} \left| \sum_{j=0}^4 B_{1j} (\hnabla_{X_j} \hat\Psi )_0  \overline{(\hnabla_{X_i} \hat\Psi)_0} + \sum_{j=0}^4 B_{2j} \hat{\psi}_0 \overline{(\hnabla_{X_i} \hat{\Psi})_{0} } + \sum_{j=0}^4 B_{3j} (\hnabla_{X_j} \hat\Psi )_1 \overline{(\hnabla_{X_i}\hat\Psi)_1}  \right| \frac{1}{|^*t|} \d {^*t} \d^2\omega \d s \nonumber\\
+ \int_{s_1}^{s_2} \int_{{\cal H}_s} \left| \sum_{j=0}^4B_{4j} \hat{\psi}_1 \overline{(\hnabla_{X_i} \hat{\Psi})_{1} } + \sum_{j=0}^4B_{5j} \hat{\psi}_0 \overline{(\hnabla_{X_i} \hat{\Psi})_{1} } + \sum_{j=0}^4B_{6j}\hat{\psi}_1 \overline{(\hnabla_{X_i} \hat{\Psi})_{0} } \right| \frac{1}{|^*t|} \d {^*t} \d^2\omega \d s \nonumber\\
+ \int_{s_1}^{s_2} \int_{{\cal H}_s} \left| \sum_{j=0}^4 B_{7j}({\hat\nabla}_{X_j}{\hat\Psi})_0 ({\hat\nabla}_{X_i}{\hat\Psi})_1 + \sum_{j=0}^4 B_{8j} ({\hat\nabla}_{X_j}{\hat\Psi})_1\overline{({\hat\nabla}_{X_i}{\hat\Psi})_0}  \right| \frac{1}{|^*t|} \d {^*t} \d^2\omega \d s.\label{integrating}
\end{gather}
Since $B_{1j}$ is of order one in $R$ (see Lemma \ref{conser1}), the terms containing the coefficients $B_{1j}$ can be controlled as
\begin{gather*}
\int_{s_1}^{s_2} \int_{{\cal H}_s} \left|\sum_{j=0}^4 B_{1j} (\hnabla_{X_j} \hat{\Psi})_0 \overline{(\hnabla_{X_i} \hat{\Psi})_0 } \right| \frac{1}{|^*t|} \d {^*t} \d^2\omega \d s \\
\lesssim \int_{s_1}^{s_2} \int_{{\cal H}_s} \frac{1}{\sqrt s} \left|\sum_{j=0}^4 (\hnabla_{X_j} \hat{\Psi})_0 \overline{(\hnabla_{X_i} \hat{\Psi})_0 } \right| \frac{R}{|^*t|} \d {^*t} \d^2\omega \d s\\
\lesssim \int_{s_1}^{s_2} \int_{{\cal H}_s} \frac{1}{\sqrt s}\sum_{j=0}^4\left( \frac{R}{|^*t|}\left| (\hnabla_{X_j} \hat{\Psi})_0 \right|^2 +  \frac{R}{|^*t|} \left| (\hnabla_{X_i} \hat{\Psi})_0 \right|^2 \right) \d {^*t} \d^2\omega \d s\\
\lesssim \int_{s_1}^{s_2} \frac{1}{\sqrt s}\sum_{j=0}^4 {\cal E}_{{\cal H}_s}(\hnabla_{X_j} \hat{\psi}_A) \d s.
\end{gather*}
Using $\frac{1}{|^*t|}\leq \frac{1}{\sqrt s}$, the terms containing the coefficients $B_{3j}$ and $B_{4j}$ can be controlled clearly by 
$$\int_{s_1}^{s_2} \frac{1}{\sqrt s} \left( {\cal E}_{{\cal H}_s} (\hat\psi_A) + \sum_{j=0}^4{\cal E}_{{\cal H}_s}(\hnabla_{X_j}\hat{\psi}_A) \right) \d s \, .$$ 
The terms containing the coefficients $B_{5j},\, B_{6j},\, B_{7j}$ and $B_{8j}$ can be controlled in the same manner, using the equivalence $\frac{1}{|^*t|} \simeq \frac{1}{\sqrt s} \sqrt{\frac{R}{|^*t|}},$ as $|^*t_0|$ large enough. For instance,
\begin{gather*}
\int_{s_1}^{s_2} \int_{{\cal H}_s} \left| \sum_{j=0}^4B_{5j} \hat{\psi}_0 \overline{(\hnabla_{X_i} \hat{\Psi})_1 } \right| \frac{1}{|^*t|} \d {^*t} \d^2\omega \d s \\
\lesssim \int_{s_1}^{s_2} \int_{{\cal H}_s} \left| \hat{\psi}_0 \overline{(\hnabla_{X_i} \hat{\Psi})_1 } \right| \frac{1}{\sqrt s}\sqrt{\frac{R}{|^*t|}} \d {^*t} \d^2\omega \d s\\
\lesssim \int_{s_1}^{s_2} \int_{{\cal H}_s} \frac{1}{\sqrt s}\left( \frac{R}{|^*t|}\left| \hat{\psi}_0  \right|^2 +  \left| (\hnabla_{X_i} \hat{\Psi})_1 \right|^2 \right) \d {^*t} \d^2\omega \d s\\
\lesssim \int_{s_1}^{s_2} \frac{1}{\sqrt s}\left( {\cal E}_{{\cal H}_s}(\hat\psi_A) + {\cal E}_{{\cal H}_s} (\hnabla_{X_i} \hat{\psi}_A) \right) \d s.
\end{gather*}
Lastly, the terms containing the coefficients $B_{2j}$ can be controlled by using again the equivalence $\frac{1}{|^*t|} \simeq \frac{1}{\sqrt s} \sqrt{\frac{R}{|^*t|}},$ and the following lemma (its proof will be given in the Appendix).
\begin{lem}\label{ine}
For $\vert{^*t}_0 \vert$ large enough, the energy of $\hat\psi_0$ on ${\cal H}_s$ can be controlled by the energies of the covariant derivatives $\hnabla_{X_i}\hat{\psi}_A$ and the energy of $\hat\psi_A$ on ${\cal H}_s$ uniformly in $s$. More precisely,
\begin{equation}
\int_{{\cal H}_s} \left| \hat\psi_0 \right|^2 \d^2 {^*t} \d \omega \lesssim \sum_{i=0}^4 {\cal E}_{{\cal H}_s} \left( \hnabla_{X_i}\hat\psi_A \right)  + {\cal E}_{{\cal H}_s}(\hat\psi_A).
\end{equation}
\end{lem}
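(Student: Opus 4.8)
The plan is to avoid a one–dimensional Hardy inequality along the slices — which does not close here, because the slice–tangential radial derivative is dominated by the $\partial_{^*t}$–direction and the resulting weights are off by a factor $|{^*t}|^2/s$ — and instead to run an elliptic estimate on the spheres $S^2_{\theta,{^*\varphi}}$. The starting point is that $\hat\psi_0$ is a spin–weighted quantity of nonzero spin weight, so at each fixed $^*t$ on ${\cal H}_s$ its expansion in spin–weighted spherical harmonics starts at $\ell\geq\frac12$; hence the round–sphere edth operator $\eth'_{S^2}$ acting on it has trivial kernel and a positive spectral gap, giving the spin–weighted Poincaré inequality $\int_{S^2}|\hat\psi_0|^2\,\d^2\omega\lesssim\int_{S^2}|\eth'_{S^2}\hat\psi_0|^2\,\d^2\omega$ with a constant independent of $^*t$ and $s$. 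Integrating in $^*t$ reduces the lemma to estimating $\int_{{\cal H}_s}|\eth'_{S^2}\hat\psi_0|^2\,\d{^*t}\,\d^2\omega$, and no boundary condition in $^*t$ or $R$ is needed — this is what makes the spherical route preferable to a radial one.

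Next I would feed in the Weyl equation to convert the angular derivatives of the bad component $\hat\psi_0$ into good–energy quantities. The second line of the GHP system \eqref{WeylEqGHP} reads $\hat\eth'\hat\psi_0=(\hat D+\hat\varepsilon)\hat\psi_1-\hat\rho\hat\psi_1-\hat\pi\hat\psi_0$; since $\hat l^a\partial_a=-\sqrt{\Delta/(2\rho^2)}\,\partial_R$ by \eqref{resNew0}, and since by \eqref{ResSpin-coffi2}–\eqref{ResSpin-coffi5} the coefficients $\hat\varepsilon,\hat\rho$ are of order zero and $\hat\pi$ is of order one in $R$, the quantity $\hat\eth'\hat\psi_0$ is a bounded combination of $\partial_R\hat\psi_1$, of $\hat\psi_1$, and of $R\hat\psi_0$. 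Now $\partial_R\hat\psi_1$ is, up to bounded connection terms, the component $(\hnabla_{X_4}\hat\Psi)_1$, whose square carries the unweighted energy density, so $\int_{{\cal H}_s}|\partial_R\hat\psi_1|^2\,\d{^*t}\,\d^2\omega\lesssim{\cal E}_{{\cal H}_s}(\hnabla_{X_4}\hat\psi_A)$; the $\hat\psi_1$ term is controlled by ${\cal E}_{{\cal H}_s}(\hat\psi_A)$; and $\int_{{\cal H}_s}R^2|\hat\psi_0|^2\lesssim\int_{{\cal H}_s}\frac{R}{|{^*t}|}|\hat\psi_0|^2\lesssim{\cal E}_{{\cal H}_s}(\hat\psi_A)$ because $R|{^*t}|\lesssim1$ on $\Omega^+_{^*t_0}$ by Lemma \ref{epsilonestimates}. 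Thus $\int_{{\cal H}_s}|\hat\eth'\hat\psi_0|^2\lesssim{\cal E}_{{\cal H}_s}(\hnabla_{X_4}\hat\psi_A)+{\cal E}_{{\cal H}_s}(\hat\psi_A)$.

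The one genuinely Kerr feature is that $\hat\eth'$ is not the round operator: by \eqref{resNew2} one has $\hat\eth'=\eth'_{S^2}+a\sin\theta\,\partial_{^*t}+O(R)$, the extra term coming from the rotation. Hence $\eth'_{S^2}\hat\psi_0=\hat\eth'\hat\psi_0-a\sin\theta\,\partial_{^*t}\hat\psi_0+\dots$, and I must also control $\partial_{^*t}\hat\psi_0$. For this I would use the first line \eqref{ResWeyl1}: dividing by the bounded, nonvanishing factor $\sqrt{2/(\Delta\rho^2)}\,(r^2+a^2)$ expresses $\partial_{^*t}\hat\psi_0$ as a bounded combination of $\hat\delta\hat\psi_1$ — i.e. of the derivatives $\partial_\theta\hat\psi_1,\partial_{^*\varphi}\hat\psi_1,\partial_{^*t}\hat\psi_1$ — together with the $R^2$–weighted derivatives $R^2\partial_R\hat\psi_0,\,R^2\partial_{^*\varphi}\hat\psi_0$ and zeroth–order terms. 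The first group is made of $\hat\psi_1$–derivatives and is therefore bounded by $\sum_{i=0}^4{\cal E}_{{\cal H}_s}(\hnabla_{X_i}\hat\psi_A)$ through the unweighted density, while the $R^2$–weighted derivatives of $\hat\psi_0$ are even smaller than the bad–weight $\hat\psi_0$–terms already present in those energies, since $R^3|{^*t}|\lesssim1$. This gives $\int_{{\cal H}_s}|\partial_{^*t}\hat\psi_0|^2\lesssim\sum_{i=0}^4{\cal E}_{{\cal H}_s}(\hnabla_{X_i}\hat\psi_A)+{\cal E}_{{\cal H}_s}(\hat\psi_A)$. Combining the spherical Poincaré inequality with these two bounds proves the lemma, uniformly in $s$ since every weight comparison rests only on Lemma \ref{epsilonestimates} for $|{^*t_0}|$ large. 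The main obstacle is precisely this interplay: on Kerr the edth operator entangles the sphere directions with $\partial_{^*t}$, so both Weyl equations must be used at once to turn every derivative of the bad component $\hat\psi_0$ into a good–energy $\hat\psi_1$–derivative — whereas in Schwarzschild the transverse derivative decouples and a single equation suffices.
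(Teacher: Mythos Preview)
Your proposal is correct and follows essentially the same route as the paper's proof in Appendix~\ref{app_2_ine2}: both invoke the spectral gap of the round-sphere $\eth'$ on weight-$\{1;0\}$ quantities to reduce to controlling $\eth'_{S^2}\hat\psi_0$, use the second GHP equation \eqref{WeylEqGHP} (equivalently \eqref{cov-eq5}) to identify $\hat\eth'\hat\psi_0$ with $(\hnabla_{X_4}\hat\Psi)_1$ modulo lower-order terms, and then use the first Weyl equation \eqref{ResWeyl1} to control the Kerr-specific correction $a\sin\theta\,\partial_{^*t}\hat\psi_0$ coming from $\hat\eth'-\eth'_{S^2}$. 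The only imprecision is your claim to control $\partial_{^*t}\hat\psi_0$ itself: since $\hat\delta\hat\psi_1$ carries a $\tfrac{1}{\sin\theta}\partial_{^*\varphi}$ term, equation \eqref{ResWeyl1} only gives control of $\sin\theta\,\partial_{^*t}\hat\psi_0$ --- but this is exactly what is needed because the correction already carries the $\sin\theta$ factor, and the paper makes this explicit by multiplying \eqref{ResWeyl1} through by $\sin\theta$ to obtain \eqref{final}.
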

Indeed
\begin{gather*}
\int_{s_1}^{s_2} \int_{{\cal H}_s} \left| \sum_{j=0}^4B_{2j} \hat{\psi}_0 \overline{(\hnabla_{X_i} \hat{\Psi})_0 } \right| \frac{1}{|^*t|} \d {^*t} \d^2\omega \d s \\
\lesssim \int_{s_1}^{s_2} \int_{{\cal H}_s} \left| \hat{\psi}_0 \overline{(\hnabla_{X_i} \hat{\Psi})_0 } \right| \frac{1}{\sqrt s}\sqrt{\frac{R}{|^*t|}} \d {^*t} \d^2\omega \d s\\
\lesssim \int_{s_1}^{s_2} \int_{{\cal H}_s} \frac{1}{\sqrt s}\left( \left| \hat{\psi}_0  \right|^2 + \frac{R}{|^*t|} \left| (\hnabla_{X_i} \hat{\Psi})_0 \right|^2 \right) \d {^*t} \d^2\omega \d s\\
\lesssim \int_{s_1}^{s_2} \frac{1}{\sqrt s}\left( \sum_{j=0}^4{\cal E}_{{\cal H}_s} ( \hnabla_{X_j}\hat\psi_A) + {\cal E}_{{\cal H}_s}(\hat\psi_A) + {\cal E}_{{\cal H}_s} (\hnabla_{X_i} \hat{\psi}_A) \right) \d s\\
\lesssim \int_{s_1}^{s_2} \frac{1}{\sqrt s}\left( \sum_{j=0}^4{\cal E}_{{\cal H}_s} ( \hnabla_{X_j}\hat\psi_A) + {\cal E}_{{\cal H}_s}(\hat\psi_A) \right) \d s.
\end{gather*}
Therefore the right-hand side of \eqref{integrating} can be controlled by 
$$\int_{s_1}^{s_2} \frac{1}{\sqrt s} \left( \sum_{j=0}^4{\cal E}_{{\cal H}_s} ( \hnabla_{X_j}\hat\psi_A) + {\cal E}_{{\cal H}_s}(\hat\psi_A) \right) \d s \, .$$
Since the function $1/\sqrt s$ is integrable on $[0,1]$, applying Gronwall's inequality we obtain the following result 
$${\cal{E}}_{\scri^+_{^*t_0}}(\hat{\psi}_A) + \sum_{X_j\in {\mathcal B}} {\cal E}_{\scri_{^*t_0}^+} \left( \hnabla_{X_j} \hat\psi_A \right) \lesssim {\cal E}_{{\cal H}_1} (\hat\psi_A) + \sum_{X_j\in \mathcal{B}} {\cal E}_{{\cal H}_1} \left( \hnabla_{X_j} \hat\psi_A \right), $$
\begin{eqnarray*}
 {\cal E}_{{\cal H}_1} (\hat\psi_A) + \sum_{X_j\in \mathcal{B}} {\cal E}_{{\cal H}_1} \left( \hnabla_{X_j} \hat\psi_A \right)  &\lesssim& \left( {\cal{E}}_{\scri^+_{^*t_0}}(\hat{\psi}_A) + {\cal{E}}_{{\cal S}_{^*t_0}}(\hat{\psi}_A)  \right) \\
&& + \sum_{X_j\in \mathcal{B}} \left\{ {\cal E}_{\scri_{^*t_0}^+} \left( \hnabla_{X_j} \hat\psi_A \right) +  {\cal E}_{{\cal S}_{^*t_0}} \left( \hnabla_{X_j} \hat\psi_A \right) \right\}.
\end{eqnarray*}

Continuing this process, we obtain the energy estimates of the Dirac fields at higher orders in the following theorem
\begin{thm}\label{peeling_covariante}
For $k \in \mathbb{N}$ and any smooth compactly supported data on ${\cal H}_1$, the associated solution $\hat\psi_A$ of the Dirac equation satisfies 
$$ \sum_{p=0}^{k}{\cal E}_{{\scri_{^*t_0}^+}} \left( \hnabla_{X_{i_j}}^p \hat\psi_A \right) \lesssim \sum_{p=0}^{k}{\cal E}_{{\cal H}_1} \left( \hnabla_{X_{i_j}}^p \hat\psi_A \right) \, , $$
$$  \sum_{p=0}^k {\cal E}_{{\cal H}_1} \left( \hnabla_{X_{i_j}}^p \hat\psi_A \right)  \lesssim
   \sum_{p=0}^k \left\{ {\cal E}_{\scri_{^*t_0}^+} \left( \hnabla_{X_{i_j}}^p \hat\psi_A \right) +  {\cal E}_{{\cal S}_{^*t_0}} \left( \hnabla_{X_{i_j}}^p \hat\psi_A \right)\right\}, $$
where
$$\hnabla_{X_{i_j}}^p \hat\psi_A = \hnabla_{X_{i_0}}\hnabla_{X_{i_1}}...\hnabla_{X_{i_p}} \hat\psi_A \, , \, X_{i_j} \in {\cal B} \, .$$  
\end{thm}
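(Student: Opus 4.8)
The plan is to argue by induction on the order $k$, the cases $k=0$ (the exact conservation law $\hnabla^a \hat{J}_a = 0$) and $k=1$ (established in the discussion preceding the statement) serving as the base. Assume both estimates hold for all orders up to $k-1$. The first task is to produce, for the $k$-th order derivative field $\hnabla_{X_{i_j}}^k \hat\psi_A$, a conservation law of exactly the same shape as \eqref{ConservationCovariant}. This is obtained by feeding the commutator expansion \eqref{commutator_high_order} into the computation of $\hnabla^{AA'}\big( (\hnabla_{X_{i_j}}^k \hat\psi_A)(\hnabla_{X_{i_j}}^k \bar{\hat\psi}_{A'}) \big)$ and re-expressing every $(\partial_{\bf a})^a$ through the Infeld--van der Waerden symbols \eqref{eqcor0}--\eqref{eqcor3}. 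One must check that, as in Lemma \ref{conser1}, all the resulting coefficient functions stay bounded for $|{^*t_0}|$ large (the angular singularity along $\partial_{^*\varphi}$ being cancelled by the order-one-in-$\sin\theta$ behaviour of $(\partial_{^*\varphi})^a$), and crucially that the coefficient multiplying the top-order self-pairing $(\hnabla_{X_{i_j}}^k\hat\Psi)_0\,\overline{(\hnabla_{X_{i_j}}^k\hat\Psi)_0}$ remains of order one in $R$, exactly as $B_{1j}$ does at first order.

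Second, I would integrate this conservation law over the slab $\Omega_{^*t_0}^{s_1,s_2}$ and apply the Stokes formula \eqref{Stokesformula} together with the nonnegativity of the flux across $\mathcal{S}_{^*t_0}$ (Lemma \ref{EquivalentSimplerEnergy}), exactly as in the first-order step. This reduces everything to dominating each term of the right-hand side by $\int_{s_1}^{s_2} s^{-1/2}\sum_{p\le k}\mathcal{E}_{\mathcal{H}_s}(\hnabla_{X_{i_j}}^p\hat\psi_A)\,\d s$. The bounded-coefficient, top-order terms are handled by the weight manipulations already used: the order-one-in-$R$ factor turns $|(\hnabla_{X_{i_j}}^k\hat\Psi)_0|^2$ into the weighted density $\frac{R}{|{^*t}|}|(\hnabla_{X_{i_j}}^k\hat\Psi)_0|^2$ that the energy sees, while the equivalence $\frac{1}{|{^*t}|}\simeq \frac{1}{\sqrt{s}}\sqrt{\frac{R}{|{^*t}|}}$ disposes of the mixed products.

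Third, the terms carrying an undifferentiated or low-order $\hat\psi_0$, namely the descendants of the $P_i(\partial_{\bf a})^a\hat\psi_A$ and $\frac{Mr-a^2}{\rho^2}(X_i)^a\hat\psi_A$ pieces of \eqref{commutator_high_order}, produce unweighted factors $\int_{\mathcal{H}_s}|(\hnabla_{X_{i_j}}^q\hat\Psi)_0|^2$ with $q<k$ that are not directly present among the energies. These are absorbed by applying Lemma \ref{ine} --- or rather its analogue for the derived fields $\hnabla_{X_{i_j}}^q\hat\psi_A$, whose governing equation is the commuted Weyl equation \eqref{commutator_high_order} with a source of order $\le q$ --- so that $\int_{\mathcal{H}_s}|(\hnabla_{X_{i_j}}^q\hat\Psi)_0|^2$ is controlled by the energies up to order $q+1$, and then closed off by the inductive hypothesis. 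Once every contribution is dominated by $\int_{s_1}^{s_2} s^{-1/2}\sum_{p\le k}\mathcal{E}_{\mathcal{H}_s}(\hnabla_{X_{i_j}}^p\hat\psi_A)\,\d s$, the integrability of $s^{-1/2}$ on $[0,1]$ lets Gronwall's inequality close both estimates after letting $s_1\to 0$ and $s_2\to 1$, the flux across $\mathcal{S}_{^*t_0}$ reappearing in the second inequality.

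The main obstacle will be the first step: verifying that repeated commutation preserves the delicate structure of the conservation law, in particular that the top-order self-pairing of $\hat\psi_0$ keeps an order-one-in-$R$ coefficient and that no uncontrolled angular singularity survives. Each commutation differentiates the coefficient functions $T_j, P_i$ and the metric and curvature quantities of Lemma \ref{shortly.Psi}, so one has to confirm that these derivatives remain bounded (up to the harmless $\sin\theta$ factors) at every order; this is precisely what guarantees that the error terms never escape the $s^{-1/2}$-integrable envelope demanded by Gronwall, and hence that the estimates remain lossless and go both ways at all orders.
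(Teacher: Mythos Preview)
Your proposal is correct and follows essentially the same approach as the paper: the paper's own proof consists of the single sentence ``Continuing this process, we obtain the energy estimates of the Dirac fields at higher orders,'' and what you have written is precisely a careful unpacking of that process --- induction via the commutator expansion \eqref{commutator_high_order}, preservation of the order-one-in-$R$ structure on the top-order $(\cdot)_0\overline{(\cdot)_0}$ pairing, the same weight manipulations, the higher-order analogue of Lemma~\ref{ine}, and closure by Gronwall. Your identification of the main verification point (that iterated commutation does not spoil the $R$-order or introduce angular singularities) is exactly the content implicit in the paper's iteration.
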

Now we define the peeling of the Dirac fields at order $k \in \mathbb{N}$ in terms of covariant derivatives as follows
\begin{defn}
We say that a solution $\psi$ of the Dirac equation peels at order $k\in \mathbb{N}$ if 
$$\sum_{p=0}^{k}{\cal E}_{\scri_{^*t_0}^+} \left( \hnabla_{X_{i_j}}^p \hat\psi_A \right) < +\infty \, .$$
\end{defn}
We obtain the optimal initial data space that guarantees the peeling at order $k$ by the following theorem
\begin{thm}
The spaces of initial data $\mathfrak {h}^k ({\cal H}_1)$ that guarantees the peeling definition at order $k\in \mathbb{N}$ is the completion of ${\cal C}_0^{\infty}(\mathcal{H}_1; \, \mathbb{S}_A)$ on the norm
$$ \left\|\hat\psi_A\right\|_{\mathfrak {h}^k ({\cal H}_1)} = \left\{\sum_{p = 0}^k{\cal E}_{{\cal H}_1} \left( \hnabla_{X_{i_j}}^p \hat\psi_A \right) \right\}^{1/2} \, .$$
\end{thm}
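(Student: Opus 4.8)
The plan is to read the theorem off as a direct consequence of the two lossless estimates recorded in Theorem \ref{peeling_covariante}, upgraded from smooth compactly supported data to the whole completion by the standard device of extending a densely defined bounded linear operator between Banach spaces. First I would fix the relevant objects. For $\phi \in \mathcal{C}_0^\infty(\mathcal{H}_1; \mathbb{S}_A)$, let $\hat\psi_A[\phi]$ denote the associated rescaled solution furnished by the well-posed Cauchy problem of Section 3.1, and let $T\phi$ be the order-$k$ jet of $\hat\psi_A[\phi]$ traced on future null infinity, measured by the norm $\|T\phi\|_{\scri^+}^2 := \sum_{p=0}^k \mathcal{E}_{\scri^+_{^*t_0}}(\hnabla_{X_{i_j}}^p \hat\psi_A[\phi])$; write $\mathcal{E}^k(\scri^+_{^*t_0})$ for the corresponding completed space. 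By construction the data norm is $\|\phi\|_{\mathfrak{h}^k(\mathcal{H}_1)}^2 = \sum_{p=0}^k \mathcal{E}_{\mathcal{H}_1}(\hnabla_{X_{i_j}}^p \phi)$.

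The sufficiency direction I would carry out as follows. The first inequality of Theorem \ref{peeling_covariante} reads $\|T\phi\|_{\scri^+}^2 \lesssim \|\phi\|_{\mathfrak{h}^k(\mathcal{H}_1)}^2$, so $T$ is bounded on the dense subspace $\mathcal{C}_0^\infty(\mathcal{H}_1; \mathbb{S}_A) \subset \mathfrak{h}^k(\mathcal{H}_1)$ and extends uniquely to a bounded operator $\overline{T}: \mathfrak{h}^k(\mathcal{H}_1) \to \mathcal{E}^k(\scri^+_{^*t_0})$. For an arbitrary datum $\Phi \in \mathfrak{h}^k(\mathcal{H}_1)$, approximated by $\phi_n \to \Phi$ in the $\mathfrak{h}^k$-norm, the solutions $\hat\psi_A[\phi_n]$ form a Cauchy sequence for the energy and their traces converge to $\overline{T}\Phi$, whose order-$k$ energy on $\scri^+_{^*t_0}$ is finite. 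This is precisely the peeling property at order $k$ of the preceding definition, so every element of $\mathfrak{h}^k(\mathcal{H}_1)$ yields a solution that peels at order $k$.

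For optimality I would invoke the second (reverse) inequality of Theorem \ref{peeling_covariante}, which bounds $\|\phi\|_{\mathfrak{h}^k(\mathcal{H}_1)}^2$ by the order-$k$ energy over the full future boundary $\scri^+_{^*t_0} \cup \mathcal{S}_{^*t_0}$. Combined with the forward control of the flux through $\mathcal{S}_{^*t_0}$ by the initial energy (at zeroth order this is the exact conservation law $\mathcal{E}_{\scri^+_{^*t_0}}(\hat\psi_A) + \mathcal{E}_{\mathcal{S}_{^*t_0}}(\hat\psi_A) = \mathcal{E}_{\mathcal{H}_1}(\hat\psi_A)$, and at higher orders it follows from the same Gronwall argument run forward), this shows that the solution operator is a bounded isomorphism, with bounded inverse, between $\mathfrak{h}^k(\mathcal{H}_1)$ and the completed future-boundary energy space. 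There being no loss in either direction, the $\mathfrak{h}^k$-norm is equivalent to the energy seen at $\scri^+$, so the completion cannot be enlarged without admitting data whose solutions fail to have finite order-$k$ energy at null infinity. This is the sense in which $\mathfrak{h}^k(\mathcal{H}_1)$ is the optimal space guaranteeing peeling at order $k$.

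The main obstacle will be making the trace-by-limit construction genuinely meaningful for non-smooth data, namely ensuring that $\overline{T}\Phi$ really represents the $\scri^+$-trace of the weak solution attached to $\Phi$ rather than an abstract limit, and, in the optimality argument, disentangling the auxiliary flux through the null hypersurface $\mathcal{S}_{^*t_0}$ from the $\scri^+$ contribution so that the final characterization is expressed purely in terms of null infinity. Both points are handled exactly as in the Schwarzschild treatment of \cite{MaNi2012}: the isomorphism supplied by the lossless two-way estimates identifies the two completions, and the uniqueness of the continuous extension guarantees that the limiting trace is the physically correct one.
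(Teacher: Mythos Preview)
Your proposal is correct and matches the paper's approach: the paper states this theorem without an explicit proof, treating it as an immediate consequence of the two-way lossless estimates of Theorem~\ref{peeling_covariante} together with the standard density/completion argument you spell out. Your elaboration via the bounded extension of a densely defined operator is precisely the mechanism the paper has in mind when it writes (in the introduction) that ``since all the energy estimates are lossless and go both ways, the optimal classes of initial data \ldots\ are given by the completion of smooth compactly supported initial data in the energy norm of order $k$''.
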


\appendix
\section{Proof of Lemma \ref{ine}}\label{app_2_ine2}
Using the covariant derivatives of the spin-frame along the Newman-Penrose vectors (see \cite{PeRi84}, pages 226-228), we have
\begin{gather} 
\frac{\sqrt 2 p}{r}\hat m^a + \frac{\sqrt 2 \bar p}{r}\hat {\bar m}^a = 2\partial_\theta = 2( \sin{^*\varphi} X_2 + \cos{^*\varphi} X_3) \nonumber\\
\Rightarrow \frac{\sqrt 2 p}{r}\left( \hat\nabla_{\hat m^a}\hat\Psi \right)_1 + \frac{\sqrt 2 \bar p}{r}\left( \hat\nabla_{\hat{\bar m}^a}\hat\Psi \right)_1 = 2 \left( \sin{^*\varphi} \left( \hat\nabla_{X_2}\hat\Psi \right)_1 + \cos{^*\varphi}\left( \hat\nabla_{X_3}\hat\Psi \right)_1 \right) \nonumber\\
\Leftrightarrow \frac{\sqrt 2 p}{r}\left( \hat\delta \hat\psi_1 + \hat\beta\hat\psi_1 - \hat\mu\hat\psi_0 \right) +  \frac{\sqrt 2 \bar p}{r} \left( \hat\delta'\hat\psi_1 + \hat\alpha\hat\psi_1 \right) = 2 \left( \sin{^*\varphi} \left( \hat\nabla_{X_2}\hat\Psi \right)_1 + \cos{^*\varphi}\left( \hat\nabla_{X_3}\hat\Psi \right)_1 \right) \nonumber\\
\Leftrightarrow 2\partial_\theta\hat\psi_1 = - \left( \frac{\sqrt 2 p}{r}\hat\beta + \frac{\sqrt 2 \bar p}{r}\hat\alpha \right)\hat\psi_1 + \frac{\sqrt 2 p}{r}\hat\mu\hat\psi_0 \nonumber\\
+ 2 \left( \sin{^*\varphi} \left( \hat\nabla_{X_2}\hat\Psi \right)_1 + \cos^*\varphi\left( \hat\nabla_{X_3}\hat\Psi \right)_1 \right).\label{cov-eq1}
\end{gather}
In addition, the equality
$$ \frac{\sqrt 2 p}{r}\hat m^a - \frac{\sqrt 2 \bar p}{r}\hat{\bar m}^a = 2 \left( \frac{i}{\sin\theta}\partial_{^*\varphi} + ia\sin\theta\partial_{^*t} \right)$$
yields two equations in the spin-frame $\left\{ \hat{o}^A, \, \hat{\iota}^A \right\}$:
\begin{gather}
\frac{\sqrt 2 p}{r} \left( \hat\nabla_{\hat m^a}\hat\Psi \right)_0 - \frac{\sqrt 2 \bar p}{r} \left( \hat\nabla_{\hat{\bar m}^a}\hat\Psi \right)_0 = 2\left( \frac{i}{\sin\theta}\left( \hat\nabla_{X_1}\hat\Psi \right)_0 + ia\sin\theta\left( \hat\nabla_{X_0}\hat\Psi \right)_0 \right) \nonumber\\
\Leftrightarrow \frac{\sqrt 2 p}{r}\left( -\hat\delta\hat\psi_0 - \hat\varepsilon\hat\psi_0 \right) - \frac{\sqrt 2 \bar p}{r}\left( \hat\rho\hat\psi_1 - \hat\delta'\hat\psi_0 - \hat\alpha\hat\psi_0 \right) = 2\left( \frac{i}{\sin\theta}\left( \hat\nabla_{X_1}\hat\Psi \right)_0 + ia\sin\theta\left( \hat\nabla_{X_0}\hat\Psi \right)_0 \right)\nonumber\\
\Leftrightarrow -2\left( ia\sin^2\theta\partial_{^*t} + i\partial_{^*\varphi} \right)\hat\psi_0 - \left( \frac{\sqrt 2 p}{r}\hat\varepsilon - \frac{\sqrt 2 \bar p}{r}\hat\alpha \right)\sin\theta\hat\psi_0 - \frac{\sqrt 2 \bar p}{r}\hat{\rho}\sin{\theta}\hat{\psi}_1 \nonumber\\
=  2\left( i\left( \hat\nabla_{X_1}\hat\Psi \right)_0 + ia\sin^2\theta\left( \hat\nabla_{X_0}\hat\Psi \right)_0 \right),  \label{cov-eq2}
\end{gather}
where $\left( \frac{\sqrt 2 p}{r}\hat\varepsilon - \frac{\sqrt 2 \bar p}{r}\hat\alpha \right)$ is of order zero in $R$. Moreover,
\begin{gather}
\frac{\sqrt 2 p}{r} \left( \hat\nabla_{\hat m^a}\hat\Psi \right)_1 - \frac{\sqrt 2 \bar p}{r} \left( \hat\nabla_{\hat{\bar m}^a}\hat\Psi \right)_1 = 2\left( \frac{i}{\sin\theta}\left( \hat\nabla_{X_1}\hat\Psi \right)_1 + ia\sin\theta\left( \hat\nabla_{X_0}\hat\Psi \right)_1 \right) \nonumber\\
\Leftrightarrow \frac{\sqrt 2 p}{r}\left( \hat\delta \hat\psi_1 + \hat\beta\hat\psi_1 - \hat\mu\hat\psi_0 \right) -  \frac{\sqrt 2 \bar p}{r} \left( \hat\delta'\hat\psi_1 + \hat\alpha\hat\psi_1 \right) = 2\left( \frac{i}{\sin\theta}\left( \hat\nabla_{X_1}\hat\Psi \right)_1 + ia\sin\theta\left( \hat\nabla_{X_0}\hat\Psi \right)_1 \right) \nonumber\\
\Leftrightarrow 2\left( ia\sin^2\theta\partial_{^*t} + i\partial_{^*\varphi} \right)\hat\psi_1 + \left( \frac{\sqrt 2 p}{r}\hat\beta - \frac{\sqrt 2 \bar p}{r}\hat\alpha \right)\sin\theta\hat\psi_1 - \frac{\sqrt 2 p}{r}\hat\mu \sin\theta\hat\psi_0 \nonumber\\
= 2 \left( i\left( \hat\nabla_{X_1}\hat\Psi  \right)_1 + ia\sin^2\theta\left( \hat\nabla_{X_0}\hat\Psi \right)_1 \right).\label{cov-eq3}
\end{gather}
The components of the covariant derivative $\hnabla_R \hat\psi_A$ can be calculated as follows
\begin{align}\label{cov-eq4} 
\left( \hnabla_{X_4}\hat\Psi \right)_0 &= \left( \hnabla_R\hat\Psi \right)_0 = \left( \hnabla_R\hat\Psi \right) \hat{o}^A = - \sqrt{\frac{2\rho^2}{\Delta}} \left( \hat D \hat{\psi}_A \right)\hat{o}^A \nonumber\\
&= - \sqrt{\frac{2\rho^2}{\Delta}} \left( \hat D \left( \hat{\psi}_1 \hat{o}_A - \hat{\psi}_0 \hat{\iota}_A \right) \right)\hat{o}^A = - \sqrt{\frac{2\rho^2}{\Delta}} \left( \hat{o}^A\hat D \hat{o}_A \hat{\psi}_1 - \hat{o}^A \hat{D}\hat{\psi}_0 \hat{\iota}_A \hat{\psi}_0 - \hat{D}\hat{\psi}_0 \right) \nonumber\\
&= - \sqrt{\frac{2\rho^2}{\Delta}} \left( \hat\kappa \hat\psi_1 - \hat D\hat\psi_0 - \hat\varepsilon \hat\psi_0 \right),
\end{align}
and similarly by the second equation in the GHP formalism of the Weyl system \eqref{WeylEqGHP},
\begin{align}\label{cov-eq5}
\left( \hnabla_{X_4}\hat\Psi \right)_1 &= \left( \hnabla_R\hat\Psi \right)_1 = - \sqrt{\frac{2\rho^2}{\Delta}} \left( \hat\varepsilon \hat\psi_1 + \hat D\hat\psi_1 - \hat\pi \hat\psi_0 \right) \nonumber\\
&= -\sqrt{\frac{2\rho^2}{\Delta}} \left( \hat{\eth}'\hat{\psi}_0 + \hat\rho\hat\psi_1 \right).
\end{align}

Now using the first equation of the rescaled Weyl system \eqref{ResWeyl1}, we obtain that
\begin{gather}
\left(\sin\theta - \frac{a^2\sin^3\theta}{r^2+a^2} \right)\partial_{^*t}\hat{\psi}_0 
= -\frac{\sin\theta}{r^2+a^2}\left( a(a\sin^2\theta \partial_{^*t} + \partial_{^*\varphi})\hat{\psi}_0 + \frac{R^2\Delta}{2}\partial_R\hat{\psi}_0 \right) \nonumber\\
+ \frac{1}{r^2+a^2}\sqrt{\frac{\Delta\rho^2}{2}} \left( \frac{r}{\sqrt{2}p}\left(ia\sin^2\theta\partial_{^*t}+i\partial_{^*\varphi} \right)\hat{\psi}_1 + \left( \sin\theta\partial_{\theta}+\frac{\cos\theta}{2}\right)\hat{\psi}_1 - \mathbb{V}_1(r,\theta)\hat{\psi}_0 \right) \nonumber\\
+\frac{\sin\theta}{r^2+a^2}\sqrt{\frac{\Delta\rho^2}{2}}\frac{r}{\sqrt{2}p}\left( \frac{ia\sin\theta}{\bar{p}}+\frac{a^2\sin\theta\cos\theta}{2\rho^2}\right)\hat{\psi}_1, \label{final}
\end{gather}
where $\mathbb{V}_1(r,\theta)$ is of order one in $R$. Using the equalities \eqref{cov-eq1}-\eqref{cov-eq5} we can transform the right-hand side of \eqref{final} into the sum of the components of $\hnabla_{X_i}\hat{\psi}_A$ (with $X_i \, (X_i\in{\cal B})$) and of $\hat{\psi}_A$ as 
$$\sum_{i=1}^4 \sum_{j=0}^1\alpha_{ij}(r,\theta,{^*\varphi})\left( \hnabla_{X_i}\hat\Psi \right)_j + \beta_j(r,\theta)\hat{\psi}_j,$$
where the coefficient functions $\alpha_{ij}$ and $\beta_j$ are bounded. Furthermore, $\alpha_{i0}$ and $\beta_0$ are of order greater than or equal to one in $R$. Using the fact that the norms of $\hat{\psi}_1$ and $R\hat{\psi}_0$ can be controlled by the energy of $\hat{\psi}_A$ on ${\cal H}_s$, we obtain that the norms of $\left(\sin\theta - \frac{a^2\sin^3\theta}{r^2+a^2} \right)\partial_{^*t}\hat{\psi}_0$ and then the norm of $\sin\theta\partial_{^*t}\hat{\psi}_0$ can be controlled by the energies of the covariant derivatives $\hnabla_{X_i}\hat{\psi}_A \, (X_i\in{\cal B})$ and the energy of $\hat{\psi}_A$ on ${\cal H}_s$ uniformly in $s$. On the other hand, the equation \eqref{cov-eq5} provides that
$$\left( \hnabla_{X_4}\hat\Psi \right)_1 = -\sqrt{\frac{2\rho^2}{\Delta}} \left( \frac{r}{\bar{p}}\left(\eth'-\frac{ia\sin\theta}{\sqrt 2}\partial_{^*t} \right)\hat\psi_0 + \frac{r}{\sqrt{2}\bar{p}}\left( \frac{ia\sin\theta}{\bar{p}} + \frac{a^2\sin\theta\cos\theta}{2\rho^2} \right)\hat{\psi}_0 + \hat\rho\hat\psi_1 \right),$$
where
$$\eth' = \frac{1}{\sqrt 2}\left( \partial_{\theta}-\frac{i}{\sin\theta}\partial_{^*\varphi}+\frac{\cot\theta}{2} \right).$$
Since the lowest eigenvalue of $\eth'$ on weighted scalar fields of weight $\left\{1;\, 0\right\}$ is positive (see \cite{PeRi84} section 4.15), the norm of $\hat{\psi}_0$ on $2-$sphere $S^2_{\theta,{^*\varphi}}$ can be controlled by the norm of $\eth'\hat{\psi}_0$. Therefore, we conclude that the norm of $\hat\psi_0$ on ${\cal H}_s$ can be controlled uniformly in $s$ by the ones of the covariant derivatives $\hnabla_{X_i}\hat\psi_A \, (X_i \in {\cal B})$ and $\hat{\psi}_A$ on ${\cal H}_s$.

\end{document}